\theoremstyle{plain}
                          \newtheorem{cor}{Corollary}     
\theoremstyle{remark}         
\theoremstyle{definition} 
\newtheorem{theorem}{Theorem}
\newcommand{\vect}[1]{\mbox{\boldmath $ #1$}}
\newcommand{\iid}{\stackrel{\mathrm{iid}}{\sim}}
\newcommand{\ind}{\stackrel{\mathrm{ind}}{\sim}}
\newcommand\real{\mathbb{R}}
\newcommand\T{\mathcal{T}}
\newcommand\bS{\bm{S}}
\newcommand\by{\bm{y}}
\newcommand\bz{\bm{z}}
\newcommand\bg{\bm{g}}
\newcommand\bups{\bm{\upsilon}}
\newcommand\brho{\bm\rho}
\newcommand\bbet{\bm\beta}
\newcommand\bthe{\bm\theta}
\newcommand\bkap{\bm\kappa}
\newcommand\D{\mathcal{D}}
    \def\independenT#1#2{\mathrel{\setbox0\hbox{$#1#2$}%
    \copy0\kern-\wd0\mkern4mu\box0}}
\newcommand{\beginsupplement}{%
        \setcounter{table}{0}
        \renewcommand{\thetable}{S\arabic{table}}%
        \setcounter{figure}{0}
        \renewcommand{\thefigure}{S\arabic{figure}}%
 }
\title{Efficient functional ANOVA through wavelet-domain Markov groves}
  \author{Li Ma \hspace{5em} Jacopo Soriano \\
    \hspace{-2em} Duke University \hspace{2.5em}  Duke University
}
\begin{document}

\maketitle
\doublespacing 

\begin{abstract}
We introduce a wavelet-domain functional analysis of variance (fANOVA) method based on a Bayesian hierarchical model. The factor effects are modeled through a spike-and-slab mixture at each location-scale combination along with a normal-inverse-Gamma (NIG) conjugate setup for the coefficients and errors. A graphical model called the Markov grove (MG) is designed to jointly model the spike-and-slab statuses at all location-scale combinations, which incorporates the clustering of each factor effect in the wavelet-domain thereby allowing borrowing of strength across location and scale. The posterior of this NIG-MG model is analytically available through a pyramid algorithm of the same computational complexity as Mallat's pyramid algorithm for discrete wavelet transform, i.e., linear in both the number of observations and the number of locations. Posterior probabilities of factor contributions can also be computed through pyramid recursion, and exact samples from the posterior can be drawn without MCMC. We investigate the performance of our method through extensive simulation and show that it outperforms existing wavelet-domain fANOVA methods in a variety of common settings. We apply the method to analyzing the orthosis data.
\end{abstract}

\section{Introduction}
\vspace{-0.5em}

This work concerns a common inference task---identifying the contributions from various sources to the variation in functional data, or functional analysis of variance (fANOVA) \cite{ramsay&silverman2005fda,zhang:2014}. Suppose functional observations are measured at a number of given locations. (Throughout this work we use ``locations'' in a general sense to refer to the points in the index or coordinate space on which the function is observed. For example, in time-series applications, these ``locations'' are time points.) A simple approach to fANOVA is to carry out ANOVA---e.g., through an $F$-test---at each location. 
Doing fANOVA in this location-by-location manner often results in poor performance due to the limited amount of data available at each location as well as the necessary multiple testing correction incurred.

An alternative, often more effective approach, is to first apply a basis transformation to the original observations and then carry out ANOVA under the new basis. Many common bases for functional data analysis can be adopted, including splines, polynomials, and Fourier basis, etc.\ \cite{ramsay&silverman2005fda}. With a properly chosen basis, this can allow more effective pooling of information across multiple locations, as well as more sparse representations of the underlying structures (here the functional variations), thereby enhancing the ability to identify such structures. The effectiveness of the different bases depends on the nature of the data; no basis is universally the best for all problems. Here our attention is focused on the wavelet basis transform, and in particular the discrete wavelet transform (DWT) \cite{mallat:1989}.

The DWT is a basis transform for functional data observed on an equi-spaced grid of locations. It has been extensively applied in applications such as signal processing and time-series analysis \cite{percival2006wavelet}. The transformed data, in the form of the so-called wavelet coefficients, characterize functional features of different scales (also referred to as frequencies) at different locations. Each wavelet coefficient is associated with one location-scale (also referred to as time-frequency) combination. This wavelet-domain representation of functional data enjoys several desirable properties. First, the wavelet transform has a
``whitening effect'' that reduces the correlation in the noise, making the common assumption of independent errors more reasonable than in the original space. A second benefit is that
wavelet transforms concentrate ``energy''---in an information theoretic sense as measured by entropy or Kullback-Leibler divergence---into a small number of location-scale combinations, and thus ``sparsify'' the underlying signal thereby making detecting such structures much easier. These properties have motivated the development of numerous wavelet-based regression methods in function estimation. Such methods are particularly effective in comparison to other functional methods when the underlying functions contain local structures~\cite{morris2006wavelet}.

More recently, several authors have proposed methods for fANOVA under the DWT \cite{rosner2000wavelet,abramovich2004optimal,abramovich2006testing,antoniadis2007estimation,mckay:2013}. 
In \cite{rosner2000wavelet,mckay:2013}, the authors treat each location-scale combination individually and carry out an ANOVA test for each, and then identify significant $p$-values after correcting for multiple testing. 
In a similar vein but instead of taking the coefficient-by-coefficient testing approach, \cite{abramovich2006testing} and \cite{abramovich2004optimal}  consider testing of the joint null hypothesis of nil factor effects over all location-scale combinations together, and constructed tests for this purpose that are minimax optimal. On the other hand, while not directly addressing fANOVA, \cite{morris2006wavelet} proposes a Bayesian functional mixed-effects model in the wavelet domain that can be applied to this problem. In particular, a regression model is adopted for the wavelet coefficient at each location-scale combination. ANOVA hypothesis testing can then be naturally handled as a model selection problem using the spike-and-slab prior on the inclusion of the fixed/random effects into the regression model. 
Inference under the model incurs a heavy computational cost requiring Markov Chain Monte Carlo (MCMC) on a large number of regression models, one for each location-scale combination. 
Instead, \cite{antoniadis2007estimation} introduces a frequentist approach for the wavelet-domain mixed effects models addressing testing in both random and fixed effects.

A key motivation for the current work is an important common phenomenon in wavelet-domain analysis that has not been amply exploited in the existing wavelet-based fANOVA methods---the location-scale clustering of functional features in the wavelet coefficients. That is, interesting functional features tend to appear in clusters (like a string of grapes) in the wavelet domain. Such correlation structure has been noted as early as \cite{donoho1994ideal}, and has been fruitfully exploited by \cite{crouse1998wavelet} in function estimation.
It is easy to imagine, and will be confirmed herein, that the location-scale dependency is prevalent in fANOVA problems as well: when a factor impacts the variance of a wavelet coefficient at one location-scale combination, it typically contributes to the variance at nearby locations and in adjacent scales as well. Thus inference should benefit, in fact substantially, from ``borrowing strength'' among nearby and/or nested location-scale combinations in identifying the variance components.

The Bayesian wavelet regression framework for fANOVA affords a natural, principled way to incorporating such dependency through designing model space priors---that is, priors on the spike-and-slab indicators that encode whether a factor contributes to the variance at different location-scale combinations. In particular, we present a model space prior in the form of a graphical model consisting of a collection of Markov trees \cite{crouse1998wavelet}, one for each factor, and hence called the Markov grove (MG). The MG prior is highly parsimonious---specified by a small number of hyperparameters,  and yet flexible enough to characterize the key dependency pattern in factor effects across adjacent/nested location-scale combinations.

Our new Bayesian hierarchical fANOVA model enjoys several important properties. First, due to the tree structure of the MG prior, when coupled with a normal-inverse-Gamma (NIG) conjugate prior specification on the regression coefficients and error variance, exact Bayesian inference for fANOVA can be achieved efficiently. In particular, we show that the joint posterior of our model has a closed form representation computable using a pyramid algorithm that operationally imitates Mallat's pyramid algorithm for the DWT \cite{mallat:1989} and achieves the same computational complexity (or simplicity rather), being linear in both the number of functional observations and the number of locations. The closed form posterior allows direct sampling from the posterior using standard Monte Carlo as opposed to MCMC. Furthermore, when testing fANOVA hypotheses, the posterior marginal probability for the alternative hypotheses (i.e., the presence of factor effects) can also be computed analytically using pyramid recursion without Monte Carlo. This makes our model particularly favorable in large-scale problems such as genomics where fANOVA needs to be completed many times.

The rest of the paper is organized as follows. In \ref{sec:method} we present our methodology.
First, we provide a brief background on Bayesian wavelet regression in Section \ref{sec:single_functional_hmt}. There we review the NIG conjugate prior and show how to use it in conjunction with the MT model to achieve adaptive shrinkage in the wavelet coefficients in a way that takes into account location-scale dependency. In Section~\ref{sec:one_way_fanova} we introduce the MG model as a generalization to the MT model, and show how to use it with the NIG specification to form a hierarchical model for wavelet-based fANOVA. We construct a full inference framework for fANOVA under this model consisting of (i) a closed form of the joint posterior computable through a pyramid algorithm, (ii) a recipe for evaluating the posterior marginal alternative probability of each factor effect at each location-scale combination based on another pyramid algorithm, and (iii) a decision rule for calling significant factor effects that properly adjusts for multiple testing. 
In \ref{sec:numerical_examples_fanova} we carry out simulations to evaluate the performance of our method and compare it to a number of wavelet-domain fANOVA methods. We also apply our method to the analysis of the orthosis data. We conclude in \ref{sec:conclusion} with brief remarks.

\vspace{-1em}

\section{Method}
\label{sec:method}
\vspace{-0.7em}

\subsection{Wavelet regression with normal-inverse-Gamma Markov tree}\label{sec:single_functional_hmt}
\vspace{-0.5em}

We start from considering Bayesian modeling of a single functional observation in the wavelet domain. We shall use this simpler problem as a medium to introduce a number of  building blocks of our more general wavelet-based fANOVA method---namely, (i) Bayesian adaptive wavelet shrinkage with the spike-and-slab prior, (ii) the normal-inverse-Gamma (NIG) conjugate specification, and (iii) the Markov tree (MT) model. We will show how one can use these three tools in conjunction to carry out adaptive shrinkage in the wavelet domain. Our approach arises from a recombination of the ideas from \cite{chipman1997adaptive}, \cite{clyde1998multiple}, \cite{crouse1998wavelet}, \cite{brown:2001}, and \cite{morris2006wavelet}.

Suppose we have a single functional observation whose values are attained
at $T$ equidistant locations $\vect{y} = (y_1, \ldots, y_T)$, and 
\vspace{-3em}

\begin{equation}\label{eq:model1}
 \begin{split}
  \vect{y} & = \vect{f} + \vect{\epsilon} \\ 
\vect{\epsilon} & \sim {\rm N}(\vect{0}, \Sigma_{\epsilon}),
 \end{split}
\end{equation}
\vspace{-2em}

\noindent where $\Sigma_{\epsilon} = \text{diag}(\sigma_{1}^2, \ldots, \sigma_T^2)$. In words, the errors are assumed to be independent across the locations but can be heterogeneous. We
wish to recover the unknown function $\vect{f}$ (or some features of it) from the noisy observation $\vect{y}$. For simplicity, we assume that $T=2^{J+1}$ for some integer $J$.
After applying the DWT to $\vect{y}$ we obtain:
\vspace{-1.5em}

$$
\vect{d} = \vect{z} + \vect{u},
$$
\vspace{-2.5em}

\noindent where $\vect{d} = \vect{y} W'$, $\vect{z} = \vect{f} W'$ and $\vect{u} =
\vect{\epsilon} W'$ with $W$ being the orthonormal matrix corresponding to the
corresponding wavelet basis. Due to properties of multivariate Gaussians, $\vect{u}$ is also
a multivariate Gaussian with a diagonal covariance matrix. 

The elements of $\vect{d}\in\real^{T}$ are referred to as the (empirical) wavelet coefficients, and those in $\vect{z}\in \real^{T}$ are the wavelet coefficients of its
mean function $\vect{f}$. In particular, one of the elements in each of $\vect{d}$, $\vect{z}$, and $\vect{u}$ is called the {\em father} (or scaling) coefficient, while the other $T-1$ are the (mother) coefficients. Without loss of generality, in this work we assume that the scaling coefficients are computed at the coarsest level. 
The elements of $\vect{d}, \vect{z}$ and $\vect{u}$
can be organized into a bifurcating tree structure, with each element in the
corresponding vector associated to a node in the tree. We use the pair of
indices $(j,k)$, where $j=0, \ldots, J = \log_2 T - 1$ and $k=0, \ldots, 2^j -
1$, to represent the $k$th node in the $j$th level of this tree. The two
children nodes of node $(j,k)$ are indexed by $(j+1, 2k)$  and $(j+1,2k+1)$.
Correspondingly, for $j>1$, the parent of $(j,k)$ is indexed by $(j-1, \lfloor
k/2 \rfloor)$. From now on we shall use {\em node} and {\em location-scale combination} interchangeably, and use $\mathcal{T}$ to denote the collection of indices $(j,k)$ corresponding to all nodes in the bifurcating tree. We shall use $d_{j,k}$, $z_{j,k}$, and $u_{j,k}$ to denote the corresponding mother wavelet coefficients.  

The model can be written in a node-specific manner (for notational simplicity, we express the model in terms of the mother coefficients, but the same holds for the father coefficients):
\vspace{-3.5em}

$$
d_{j,k} = z_{j,k} + u_{j,k} \quad \text{where} \quad u_{j,k} \sim {\rm N}(0,\sigma_{j,k}^2).
$$
\vspace{-2.3em}

\noindent Bayesian inference on this regression model proceeds by placing priors on $z_{j,k}$ as well as on the hyperparameter $\sigma_{j,k}^2$. 

It is well-known that effective inference in the wavelet regression should exploit the underlying sparsity of the wavelet coefficients---that is, many of the coefficients $z_{j,k}$ are (or very close to) zero---due to energy concentration. Hence, data-adaptive shrinkage toward zero is critical for effectively inferring $z_{j,k}$. A very popular Bayesian strategy to achieving this, which has been adopted by several authors, is to place a two-group, or so-called spike-and-slab, mixture prior on $z_{j,k}$ \cite{chipman1997adaptive,clyde1998multiple,clyde:2000,brown:2001,morris2006wavelet}:
\vspace{-1.5em}

$$
z_{j,k} \sim (1 - \pi_{j,k}) \cdot \delta(0) + \pi_{j,k} \cdot N(0, \tau_j
\sigma_{j,k}^2).
$$
In words, with prior probability $\pi_{j,k}$, the wavelet coefficient $z_{j,k}$ is non-zero and its value is generated from a Gaussian distribution. (More generally, the spike does not have to be exactly at 0 but can be a Gaussian with a much smaller variance, to which our method will also apply.)
The hyperparameter $\tau_j$ is a level-specific dispersion parameter that characterizes the overall level of variability in the wavelet coefficients at level $j$. Specifically, we consider the following parametric structure as proposed in \cite{abramovich1998wavelet}:
\vspace{-1.5em}

$$
\tau_j = 2^{-\alpha j}\tau, 
$$
\vspace{-2.3em}

\noindent for some $\alpha, \tau > 0$. This implies 
that the wavelet coefficients tend to be smaller as the level $j$
increases.  The parameter $\alpha$ controls the
smoothness  of the functional observations. Larger $\alpha$ corresponds to smoother functions. 
In particular, \cite{abramovich1998wavelet} discusses guiding principles for selecting $\alpha$ to generate functions of various regularities. A moderate choice that works well for a variety of common functions, as recommended in \cite{abramovich1998wavelet}, is $\alpha=0.5$. 
Alternatively, $\alpha$ and $\tau$ can both be chosen through an empirical Bayes approach by maximizing the marginal likelihood (see Section~\ref{sec:prior} for details), which as we will illustrate in the numerical examples can often lead to better performance through incorporating additional adaptivity.

The spike-and-slab prior can be written hierarchically with the introduction of a hidden state $S_{j,k} \in \{ 0,1 \}$ \cite{clyde:2000}, with $S_k=1$ indicating that the empirical wavelet coefficient $d_{j,k}$ contains a signal $z_{j,k}\neq 0$, and when $S_k=0$ otherwise. Formally,
\vspace{-3em}

\begin{align}\label{eq:spike_and_slab}
 z_{j,k} | S_{j,k} &\sim 
 {\rm N}(0, S_{j,k}\cdot \tau_j \sigma_{j,k}^2)
\end{align}
\vspace{-3em}

The error variance $\sigma_{j,k}^2$ is typically unknown, and it can be inferred from the data. In many applications of wavelet regression, the error variance is assumed to be homogeneous, i.e., $\sigma_{j,k}^2\equiv \sigma_0^2$ for all $j$ and $k$.
It has been noted that homogeneous error variance is often unrealistic \cite{morris2006wavelet}. Thus we allow $\sigma_{j,k}^2$ to be heterogeneous and adopt a hyperprior on them:
\vspace{-1.7em}

\begin{equation}\label{eq:inverse_gamma_prior}
\sigma_{j,k}^2 \ind \text{Inv-Gamma}(\nu + 1, \nu \sigma_0^2).
\end{equation}
\vspace{-2.5em}

\noindent The inverse-Gamma prior maintains conjugacy to
the Gaussian model, and consequently the marginal likelihood can be evaluated
analytically. This hierarchical specification includes the homogeneous variance as a special case because as $\nu \to  \infty$, $\sigma_{j,k}^2 \overset{p}{\to} \sigma_0^2$.

Donoho and Johnstone \cite{donoho1994ideal} noted a prevalent phenomenon in many applications of
inference in wavelet spaces: the ``signals''---the wavelet coefficients 
that are large
in magnitude---often show up in clusters in the location-scale tree.
This phenomenon, for instance, can be clearly seen in the 
 four test functions presented in  \cite{donoho1994ideal}
(see \ref{fig:donoho_functions}). 
In particular, when the coefficient $z_{j,k}$ deviates far away from zero, the coefficients of the two children in the bifurcating location-scale tree, namely $z_{j+1,2k}$ and $z_{j+1,2k+1}$ tend to be away from zero as well. Such location-scale clustering is particularly strong for functions with sharp boundaries and abrupt changes such as blocks, bumps, and doppler. Crouse {\it et al} \cite{crouse1998wavelet} pointed out that the clustering pattern in the wavelet coefficients can be directly exploited to improve adaptive shrinkage, and proposed a graphical modeling strategy to induce such spatial-scale dependency by jointly modeling the latent
states $S_{j,k}$ using a Markov process, resulting in a hidden Markov model
evolving on the location-scale tree, called the \emph{Markov tree} (MT).

\begin{figure}[t]
 \centering
 \includegraphics[width= 0.7\textwidth]{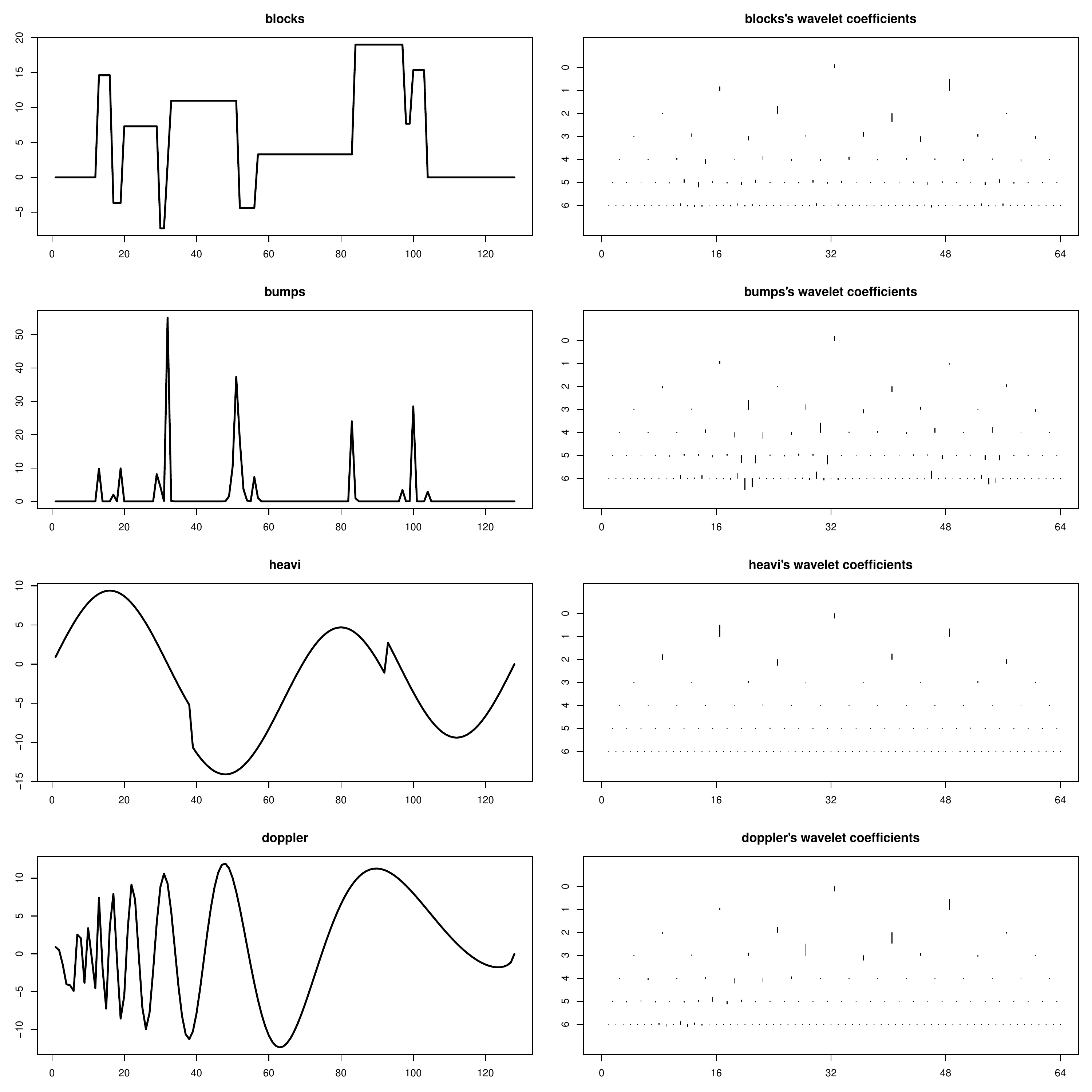}
 \caption{The four test functions  from \cite{donoho1994ideal} and the
associated mother wavelet coefficients. The coefficients for the test
functions \emph{blocks}, \emph{bumps} and \emph{doppler} display strong patterns of location-scale clustering.}
\label{fig:donoho_functions}
\end{figure}

Under the MT for $\bS=\{S_{j,k}:(j,k)\in\T\}$, the shrinkage state of node $(j,k)$ depends on that of its 
parent through a Markov transition:
\vspace{-1.7em}

\begin{equation}\label{eq:transition_probs}
\Pr(S_{j,k} = s' | S_{j-1,\lfloor
k/2 \rfloor } = s) = \rho_{j,k}(s,s'),
\end{equation}
\vspace{-2.3em}

\noindent where $\rho_{j,k}(s,s')$ for $s,s'\in\{0,1\}$ are called the {\em state transition probabilities}, and they can be organized into a $2\times 2$ transition matrix, $\brho_{j,k}$ for each node $(j,k)$. 

A simple and flexible two-hyperparameter specification of these transition matrices is:
\begin{equation}\label{eq:trans_matrix_wave}
\vect{\rho}_{j,k} = \left[ 
\begin{array}{ll}
 \rho_{j,k}(0,0) & \rho_{j,k}(0,1) \\
 \rho_{j,k}(1,0) & \rho_{j,k}(1,1)
\end{array}
\right] = \left[ 
\begin{array}{ll}
 \max\{1 - \eta 2^{-j},0\} & \min\{\eta 2^{-j},1\} \\
 1 - \gamma & \gamma
\end{array}
\right],
\end{equation}
where $0< \gamma < 1$ 
and $\eta > 0$. 
The parameter $\gamma$ induces the spatial-scale  dependency of the wavelet signal.
Larger $\gamma$ values correspond to stronger correlation or clustering in the large wavelet coefficients.
On the other hand, the parameter $\eta$ controls how likely it is to have a ``signal'', i.e., non-zero
wavelet coefficient in each level. 
The exponential decaying factor $2^{-j}$ counters exactly the exponential increase in the expected number of wavelet coefficients in higher resolution, and keeps the prior expected number of {\em de novo} signals (in the sense that a node contains a signal but its parent does not) in each resolution fixed at $\eta$. 

There are two strategies to choosing the hyperparameters $(\eta,\gamma)$. One is to elicit them based on some criteria for multiplicity adjustment and the other is to choose them
using an empirical Bayes approach by maximizing the marginal likelihood. We shall discuss both strategies in Section~\ref{sec:prior}.

Because the root
node $(0,0)$ does not have a parent, the initial state of the process, $S_{0,0}$,
is specified by a set of initial state probabilities $\vect{\rho}_{0,0} =
(\rho_{0,0}(0), \rho_{0,0}(1))$ such that: 
\vspace{-1.5em}

\begin{equation}\label{eq:initial_conditions} 
\Pr(S_{0,0}=s) = \rho_{0,0}(s) \quad \text{for } s \in \{0,1\}.
\end{equation}
\vspace{-2em}

\noindent Combining the MT model \cite{crouse1998wavelet} on $\bS$ and the NIG hierarchical setup \cite{clyde1998multiple, clyde:2000}, Eqs.~\eqref{eq:spike_and_slab},
\eqref{eq:inverse_gamma_prior},
\eqref{eq:transition_probs}, and \eqref{eq:initial_conditions}  
together give a new hierarchical model for the wavelet coefficients, which we shall refer to as the {\em normal inverse-Gamma Markov tree},
or NIG-MT.

We next show how to do inference under the NIG-MT model.
In particular, we show that the tree nature of the MT combined with the
normal inverse-Gamma setup results in full conjugacy of the NIG-MT: the joint posterior on $\{ S_{j,k},
z_{j,k}, \sigma_{j,k}^2 : (j,k) \in \mathcal{T} \}$ is still an NIG-MT whose parameters can be computed
analytically, and can be sampled from directly. 

To this end, let us consider a more general case with $n(\geq 1)$ i.i.d.\ functional
observations $\vect{y}^{(1)}, \vect{y}^{(2)}, \ldots, \vect{y}^{(n)}$ from 
model  \eqref{eq:model1}. From now on, we shall use the superscript ``$(i)$''
to indicate the terms corresponding to the $i$th observation. The node-specific
model after DWT becomes:
\vspace{-3em}

$$
d_{j,k}^{(i)} = z_{j,k} + u_{j,k}^{(i)} \quad \text{where} \quad u_{j,k}^{(i)}
\sim N(0, \sigma_{j,k}^2).
$$
\vspace{-2em}

\noindent Our interest lies in finding the posterior distribution on $\{ S_{j,k},
z_{j,k}, \sigma_{j,k}^2 : (j,k) \in \mathcal{T} \}$ given the observed data. Let $m_{j,k}(s)$ be the marginal likelihood for the node-specific
model on $(j,k)$ given that $S_{j,k} = s \in \{0,1\}$:
\vspace{-1.2em}

$$
m_{j,k}(s)  = \int p(\vect{d}^{(1)}, \ldots, \vect{d}^{(n)} | S_{j,k}=s,
z_{j,k},  \sigma_{j,k}^2)\, \pi(z_{j,k}, \sigma_{j,k}^2)\, d z_{j,k} d
\sigma_{j,k}^2.
$$
From the normal-inverse-Gamma
conjugacy, the marginal likelihood is in closed form:
$$
m_{j,k}(s) = 
   \frac{( \nu \sigma_0^2 )^{\nu + 1} \Gamma(\nu + n/2+1
)}{(2\pi)^{n/2}\Gamma(\nu + 1)} \cdot \bigg[ \dfrac{\tau_j^{-1}}{ n + \tau_j^{-1} } \bigg]^{s/2} \cdot
\bigg[ \nu
\sigma_0^2 + \dfrac{1}{2}\bigg( 
\sum_i (d_{j,k}^{(i)})^2  - 
s\cdot \dfrac{( n \bar{d}_{j,k}  )^2}{ n+ \tau_j^{-1} }
\bigg)
\bigg]^{ - \nu - n/2 - 1}
$$
where 
$\bar{d}_{j,k} = \sum_i d_{j,k}^{(i)} / n$.

The following theorem shows that the NIG-MT model is completely {\em conjugate} in the sense that the joint posterior is still an NIG-MT. Moreover, the posterior hyperparamters are  available analytically through a recursive algorithm operationally similar to the pyramid algorithm for DWT \cite{mallat:1989}. From now on, we shall use $\mathcal{D}$ to represent the totality of
data.

\begin{theorem}\label{thm:post_hmt}
 The joint posterior on $\{ S_{j,k},
z_{j,k}, \sigma_{j,k}^2 : (j,k) \in \mathcal{T} \}$ is still an NIG-MT as follows:
\begin{itemize}
 \item The posterior of the hidden states $S_{j,k}$ is still a MT:
\begin{enumerate}
 \item State transition probabilities:
 \vspace{-1em}
 
 $$
  \Pr(S_{j,k} = s' | S_{j-1, \lfloor
k/2 \rfloor} = s, \mathcal{D} ) = \rho_{j,k}(s,s') \dfrac{ \phi_{j,k}(s') }{
\xi_{j,k}(s)}, 
 $$ 
 \vspace{-2em}
 
 \noindent for $s,s' \in \{0,1\}$ and $ j=1,2, \ldots, J$;
 \item Initial state probabilities:
  $$
  \Pr(S_{0,0} = s | \mathcal{D} ) = \rho_{0,0}(s) \dfrac{ \phi_{0,0}(s) }{
\xi_{0,0}(0)} \quad \text{for } s \in \{0,1\}.
 $$ 
\end{enumerate}
 \item The posterior of the variances $\sigma_{j,k}^2$ given $S_{j,k}$ is:
$$
[ \sigma_{j,k}^2 | S_{j,k}, \mathcal{D} ] \sim \text{Inv-Gamma}
\bigg(\nu+1 + \dfrac{n}{2},  \nu
\sigma_0^2 + \dfrac{1}{2}\bigg( 
\sum_i (d_{j,k}^{(i)})^2  - 
\dfrac{S_{j,k}\cdot ( n \bar{d}_{j,k}  )^2}{ n+ \tau_j^{-1} }
\bigg)
\bigg).
$$
 \item The posterior of $z_{j,k}$ given $S_{j,k}$ and $\sigma_{j,k}^2$ is:
$$
[z_{j,k} | \sigma_{j,k}^2, S_{j,k}, \mathcal{D}] \sim
{\rm N}\bigg( \dfrac{S_{j,k}\cdot n\bar{d}_{j,k}}{n+\tau_j^{-1}},
\dfrac{S_{j,k} \cdot \sigma_{j,k}^2}{n+\tau_j^{-1}}  \bigg). $$
\end{itemize}
The mappings $\phi_{j,k}$ and $\xi_{j,k}: \{0,1\} \mapsto [0, + \infty)$
are defined recursively in $j$ and can be computed through a bottom-up pyramid algorithm as follows:
\begin{align*}
\phi_{j,k}(s)& =
\left\{ 
\begin{array}{ll}
 m_{j,k}(s) \cdot \xi_{j+1,2k}(s) \cdot \xi_{j+1,2k+1}(s) & \text{for
$j=0,1,2\ldots,J-1$}\\
m_{j,k}(s) & \text{for $j=J$,}
\end{array}  \right. \\
 \xi_{j,k}(s) & = 
\left\{
\begin{array}{ll}
 \sum_{s'\in\{0,1\}}
\rho_{j,k}(s,s')\cdot \phi_{j,k}(s') & \text{for
$j=1,2,\ldots,J$}\\
\sum_{s'\in\{0,1\}} \rho_{0,0}(s')\cdot \phi_{j,k}(s') & \text{for $j=0$.}
\end{array}
\right.
\end{align*}
\end{theorem}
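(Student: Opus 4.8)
The plan is to reduce the whole computation to a two-stage argument: first integrate out the continuous parameters $(z_{j,k},\sigma_{j,k}^2)$ node by node using conjugacy, and then treat the resulting node-specific marginal likelihoods $m_{j,k}(s)$ as node potentials in a tree-structured distribution over the hidden states $\bS$, from which the posterior transition probabilities are extracted by a single leaves-to-root (pyramid) sweep. The structural fact that makes this work, and which I would establish first, is that conditional on the full configuration $\bS$ the triples $(z_{j,k},\sigma_{j,k}^2,\{d_{j,k}^{(i)}\}_i)$ are mutually independent across nodes $(j,k)\in\T$. This holds because the DWT is orthonormal, so the coefficients at distinct nodes are independent Gaussians given the parameters, the Inv-Gamma priors are assigned independently across nodes, and $\bS$ enters the prior on $(z_{j,k},\sigma_{j,k}^2)$ only through the local state $S_{j,k}$.

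Granting that factorization, parts (2) and (3) are immediate. Fixing $S_{j,k}=s$, the node model is a Gaussian location--scale problem with an NIG prior on $(z_{j,k},\sigma_{j,k}^2)$, and the stated posterior Inv-Gamma law for $\sigma_{j,k}^2$ and conditional normal law for $z_{j,k}$ are exactly the standard NIG update using the sufficient statistics $\sum_i (d_{j,k}^{(i)})^2$ and $n\bar d_{j,k}$; the same integral produces the closed form for $m_{j,k}(s)$ recorded just before the theorem. The factor $S_{j,k}$ in the slab variance collapses the slab onto the spike when $S_{j,k}=0$, which is precisely why the posterior mean and variance of $z_{j,k}$ both carry the multiplicative $S_{j,k}$.

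For the state posterior I would let $\mathcal{D}_{j,k}$ denote the data attached to the entire subtree rooted at $(j,k)$ and prove, by induction running from the leaves $j=J$ up to the root, that $\phi_{j,k}(s)=p(\mathcal{D}_{j,k}\mid S_{j,k}=s)$ and $\xi_{j,k}(s)=p(\mathcal{D}_{j,k}\mid S_{j-1,\lfloor k/2\rfloor}=s)$. The base case is $\phi_{J,k}(s)=m_{J,k}(s)$; for the inductive step, conditional independence of the node's own data and its two child subtrees given $S_{j,k}$ gives $\phi_{j,k}(s)=m_{j,k}(s)\,\xi_{j+1,2k}(s)\,\xi_{j+1,2k+1}(s)$, while $\xi_{j+1,2k}(s)=\sum_{s'}\rho_{j+1,2k}(s,s')\phi_{j+1,2k}(s')$ is exactly marginalization over the child state. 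With these identities, conditioning on the parent state $S_{j-1,\lfloor k/2\rfloor}=s$ d-separates the subtree at $(j,k)$ from all data outside it, so $\Pr(S_{j,k}=s'\mid S_{j-1,\lfloor k/2\rfloor}=s,\mathcal{D})=\Pr(S_{j,k}=s'\mid S_{j-1,\lfloor k/2\rfloor}=s,\mathcal{D}_{j,k})$, and a one-line Bayes step with numerator $\rho_{j,k}(s,s')\phi_{j,k}(s')$ and normalizer $\xi_{j,k}(s)$ gives the claimed transition. The root formula is the same computation with $\rho_{0,0}(\cdot)$ in place of a parent transition, the normalizer being the state-independent quantity written $\xi_{0,0}(0)$. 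Since each posterior transition depends only on the parent state, the Markov-tree factorization is preserved, so the posterior is again an NIG-MT.

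I expect the main obstacle to be the conditional-independence bookkeeping rather than any individual computation: one must verify carefully that, after integrating out $(z_{j,k},\sigma_{j,k}^2)$, the subtree marginal likelihoods combine multiplicatively, and that conditioning on the parent state genuinely d-separates a subtree from the remaining data. It is this separation---not mere marginal independence---that lets the posterior retain the Markov-tree form with transitions given by the local $\phi/\xi$ ratio. Everything else, namely the NIG integrals and the sum-product recursion, is routine once the tree factorization is in place.
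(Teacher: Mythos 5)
Your proposal is correct and follows essentially the same route as the paper: the paper proves Theorem~\ref{thm:post_hmt} by specializing its proof of Theorem~\ref{thm:post_anova}, which likewise identifies $\phi_{j,k}$ and $\xi_{j,k}$ as subtree marginal likelihoods, establishes the pyramid recursion via conditional independence of the child subtrees given the node state, obtains the transition probabilities by a one-line Bayes step, and dispatches the $(z_{j,k},\sigma^2_{j,k})$ posteriors by standard NIG conjugacy. Your only addition is to make explicit the d-separation step that lets $\mathcal{D}$ be replaced by $\mathcal{D}_{j,k}$ in the conditioning, which the paper uses implicitly.
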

\noindent Remark I: The recursive computation of the mappings $\phi_{j,k}$ and $\xi_{j,k}$
is operationally analogous to Mallat's pyramid algorithm \cite{mallat:1989} for carrying out the DWT. In the order $J,J-1,\ldots,1,0$, it computes the mapping at a node based on the mapping values on its children in the next resolution. The algorithm achieves the same computational complexity for evaluating the posterior exactly as Mallat's algorithm, that is, linear both in $n$ and in $T$. 
\vspace{0.5em}

\noindent Remark II: The term $\xi_{0,0}(0)$ is the overall marginal likelihood (integrating out all the latent variables) given the hyperparameters, which we can use to set the hyperparameters through a common empirical Bayes strategy---maximum marginal likelihood estimation (MMLE).
\vspace{0.5em}

Given the analytical form of the joint posterior, the posterior mean of $\bz$ can also be computed exactly. To this end, we first use a top-down pyramid algorithm to compute the posterior marginal probability of
the hidden states as follows. In the order $j=1,2,\ldots,J$, the posterior marginal probability of $S_{j,k}$ for each $k$ is available as
\vspace{-2.5em}

\[
\Pr(S_{j,k}= s'\,|\,\mathcal{D})=\sum_{s\in \{0,1\}} \Pr(S_{j-1,\lfloor k/2
\rfloor}=s\,|\,\mathcal{D})\cdot \Pr(S_{j,k}=s'\,|\,S_{j-1,\lfloor k/2
\rfloor}=s,\mathcal{D}).
\]
\vspace{-2em}

\noindent Then the posterior mean of $\vect{z}$ is given by
\vspace{-2.5em}

\begin{align}
\label{eq:z_shrinkage}
\tilde{z}_{j,k}:={\rm E}(z_{j,k}\,|\,\mathcal{D}) = 
\Pr(S_{j,k}=1\,|\,\mathcal{D})\cdot \dfrac{n}{n+\tau_j^{-1}}  \bar{d}_{j,k},
\end{align}
\vspace{-2.5em}

\noindent which has an intuitive explanation in terms of shrinkage. The average of the
observed wavelet coefficients $\bar{d}_{j,k}$ is shrunk toward the prior mean 0
with the amount of shrinkage being averaged over the different shrinkage states.
By applying an inverse DWT to $\tilde{\vect{z}}$ we can get the posterior mean
of $\vect{f}$, $E(\vect{f}\,|\,\mathcal{D}) = W^{-1}\tilde{\vect{z}}$.

In addition to the posterior mean, one can construct credible intervals for $\vect{z}$ and for $\vect{f}$ by sampling from the joint posterior of $\{(S_{j,k},z_{j,k},\sigma^2_{j,k}):j=0,2,\ldots,J,k=0,1,\ldots,2^{j}-1\}$ according to Theorem~\ref{thm:post_hmt}. Because the exact posterior is available, no MCMC is needed and the sampling is standard Monte Carlo. Given a posterior samples Bayesian inference can proceed as usual. For example, a credible band for $\vect{f}$ is available from a posterior sample $\vect{f}$, attained through applying an inverse DWT to the posterior sample on~$\vect{z}$. We illustrate the work of this model in function denoising through simulations in Section~\ref{sec:denoising}.

\vspace{-0.7em}

\subsection{Wavelet fANOVA with normal-inverse-Gamma Markov grove}
\label{sec:one_way_fanova}

Next we present our main methodology for fANOVA. We first introduce our framework in one-way fANOVA as the notation is much simpler with all the essential components of the framework present, and then generalize the formulation to the general multi-way case.
\vspace{0.5em}

{\it One-way fANOVA.} Suppose we have $G$ groups of independent functional observations whose values are
attained at $T$ equidistant points $\vect{y}^{(g,i)} = (y_1^{(g,i)}, \ldots,
y_{T}^{(g,i)})$. Suppose
\vspace{-2.5em}

\begin{equation}\label{eq:oneway_model}
 \begin{split}
  \vect{y}^{(g,i)} & = \vect{f}^{(g)} + \vect{\epsilon}^{(g,i)} \\ 
\vect{\epsilon}^{(g,i)} & \sim {\rm N}(\vect{0}, \Sigma_{\epsilon}),
 \end{split}
\end{equation}
\vspace{-2em}

\noindent  where $g=1, \ldots, G$ is the group index, $i = 1, \ldots, n_g$ is the index for the replicates in the $g$th group,  and $\Sigma_{\epsilon} =
\text{diag}(\sigma_{1}^2, \ldots, \sigma_T^2)$. The (one-way) fANOVA problem concerns identifying the variation among $\vect{f}^{(g)}$, if any, from that in the noise.

Treating the first group as the baseline $\vect{f}=\vect{f}^{(1)}$, and letting $\vect{b}^{(g)} = \vect{f}^{(g)}-\vect{f}^{(1)}$ be the contrast of each group to the baseline, we can write
\vspace{-3em}

\begin{align}
\label{eq:oneway_model2}
  \vect{y}^{(g,i)} = \vect{f} + \vect{b}^{(g)} + \vect{\epsilon}^{(g,i)}.
\end{align}
\vspace{-3em}

\noindent By design $\vect{b}^{(1)}=\vect{0}$, and the ANOVA problem boils down to inference on the contrast functions $\vect{b}^{(g)}$ for $g\geq 2$.
After applying the DWT we obtain 
\vspace{-1.5em}

$$
\vect{d}^{(g,i)}  = 
  \vect{z} + \bbet^{(g)} + \vect{u}^{(g,i)} 
$$
\vspace{-2.5em}

\noindent where $\bz=W\vect{f}$, $\bbet^{(g)} = W \vect{b}^{(g)}$, and $\vect{u}^{(g,i)}=W\vect{\epsilon}^{(g,i)}$ for $g=1,2,\ldots,G$, and $i=1,2,\ldots,n_{g}$. We can again write the model in a node-specific manner:
\vspace{-1.5em}

$$
d_{j,k}^{(g,i)} = 
 z_{j,k} + \beta_{j,k}^{(g)} + u_{j,k}^{(g,i)} \quad \text{where } u_{j,k}^{(g,i)} \ind {\rm N}(0,\sigma_{j,k}^2).
$$
\vspace{-2em}

As before, we introduce a latent indicator $S_{j,k}$ for each $z_{j,k}$ and adopt the same NIG-MT setup for $\{z_{j,k},S_{j,k},\sigma^2_{j,k}: (j,k)\in \T\}$ as given in Eqs.~\eqref{eq:spike_and_slab},
\eqref{eq:inverse_gamma_prior}, \eqref{eq:transition_probs}, and \eqref{eq:initial_conditions}. Under this formulation, ANOVA can be accomplished through inference on the contrast coefficients $\beta_{j,k}^{(g)}$. 
To this end, we introduce another latent indicator $R_{j,k}\in\{0,1\}$ such that
\vspace{-2.7em}

\begin{align}
\label{eq:r_spike_and_slab}
[ \beta_{j,k}^{(2)}, \ldots, \beta_{j,k}^{(G)} | \sigma_{j,k}^2, R_{j,k} ] \iid
 {\rm N}(0, R_{j,k}\cdot  \upsilon_j\sigma_{j,k}^2)
\end{align}
\vspace{-2.8em}

\noindent where similar to $\tau_j$, $\upsilon_j = 2^{-\alpha j} \upsilon$ 
is a scaling parameter that characterizes the size of the differences across the factor levels. Our motivation to defining a different scaling parameter $\upsilon$ than $\tau$ for the $\beta$'s is that
the $\beta$'s characterize the difference across the factor levels while the $z$ that for the baseline function mean. 
It is often the case that the scale of the $z$'s are substantially different than that of the $\beta$'s.

Just as $\{S_{j,k}:(j,k)\in\T\}$ are modeled in a correlated manner to capture the spatial-scale dependency in $z_{j,k}$, we do that for the $\{R_{j,k}:(j,k)\in\T\}$ as well. Intuitively, if a factor contributes to the variation at one location-scale combination, then it typically contributes to the variation at the children/neighbor nodes as well.
Again, an MT is a convenient choice for jointly modeling the latent indicators $\{R_{j,k}:(j,k)\in\T\}$. 
We let $\bkap_{j,k}$ denote the corresponding state transition matrix (or the initial probability vector when $(j,k)=(0,0)$), which can be specified in the same way as given in  \eqref{eq:trans_matrix_wave} for $\brho_{j,k}$ . That is for $r,r'\in\{0,1\}$,
\vspace{-2.5em}

\begin{align}
\label{eq:r_tran_mat}
 \Pr(R_{j,k} = r' | R_{j-1,\lfloor
k/2 \rfloor } = r) & = \kappa_{j,k}(r,r') \text{ for $j> 0$} \quad \text{and} \quad \Pr(R_{0,0}=r)= \kappa_{0,0}(r).
\end{align}
\vspace{-2.5em}

Now we arrive at a fully specified joint model on $\{z_{j,k},S_{j,k},\bbet_{j,k},R_{j,k},\sigma^2_{j,k}: (j,k)\in \T\}$ given by Eqs.~\eqref{eq:spike_and_slab},
\eqref{eq:inverse_gamma_prior}, \eqref{eq:transition_probs}, \eqref{eq:initial_conditions}, \eqref{eq:r_spike_and_slab}, and \eqref{eq:r_tran_mat}. 
It is specified by the NIG conjugate priors on $(z_{j,k},\bbet_{j,k},\sigma^{2}_{j,k})$ given the latent indicators, and two MTs on the latent indicators. For this reason, we shall refer to this model as a {\em normal-inverse-Gamma Markov grove} (NIG-MG).

Next we show how Bayesian inference can be carried out for the
NIG-MG. It turns out that the joint posterior $\{z_{j,k},
\bbet_{j,k}, S_{j,k}, R_{j,k}, \sigma_{j,k}^2 : (j,k)
\in \mathcal{T} \}$ can again be computed analytically through a
pyramid algorithm whose complexity is linear in both $n$ and $T$. Accordingly, posterior marginal and joint null/alternative probabilities can also be evaluated exactly, and one can  sample from the exact posterior using standard Monte Carlo.

We write the node-specific model in matrix notation:
\vspace{-1.7em}

$$
\vect{d}_{j,k} = X \vect{\theta}_{j,k} + \vect{u}_{j,k},
$$
\vspace{-2.5em}

\noindent where  $\vect{d}_{j,k} = (d_{j,k}^{(1,1)}, \ldots, d_{j,k}^{(n_G,G)})'$
is the vector of the wavelet coefficients for all the observations at node
$(j,k)$, $\vect{\theta}_{j,k} = (z_{j,k},
\beta_{j,k}^{(2)}, \ldots, \beta_{j,k}^{(G)})'$ is the vector of the wavelet
coefficients for the mean functions, $\vect{u}_{j,k} =
(u_{j,k}^{(1,1)}, \ldots, u_{j,k}^{(n_G,G)})' $ is the vector of the residual
errors, and $X$ is the design matrix.  The design matrix can be written as
\vspace{-1.7em}

 $$
X = ( \mathbbm{1}_n, \vect{e}_2, \ldots, \vect{e}_G ),
$$
\vspace{-2.5em}

\noindent where $ n = \sum_{g=1}^G n_g$, $\mathbbm{1}_n$ is a vector of $n$ ones, and $\vect{e}_g$ is a binary vector where the
$h$th element is equal to one if  the $h$th observation belongs to
group $g$, and equal to zero otherwise. We also define the following matrices for $s,r\in\{0,1\}$:
\vspace{-1.5em}

$$
X(s,r) = 
(s\mathbbm{1}_n, r\vect{e}_2, \ldots, r\vect{e}_G ),\quad 
\Lambda_{j} = 
\text{diag}(1/\tau_j,\underbrace{1/\upsilon_j,1/\upsilon_j,\ldots,1/\upsilon_j}_{\text{$G-1$ copies}} ),
$$
and 
\[
M(s,r) =  \left(
\begin{array}{ll}
 s & sr \mathbbm{1}'_{G-1} \\
 sr \mathbbm{1}_{G-1} & r  \mathbbm{1}_{G-1}\mathbbm{1}'_{G-1}
\end{array}
\right).
\]

\noindent The marginal likelihood for the node-specific model on
$(j,k)$ given $S_{j,k} = s$ and $R_{j,k} = r$ is
\begin{equation}\label{eq:marginal_like}
m_{j,k}(s,r) = 
  \frac{( \nu \sigma_0^2 )^{\nu+1} \Gamma(\nu + n/2+1)}{(2\pi)^{n/2}\Gamma(\nu + 1)} \cdot \dfrac{ |\Lambda_{j}|^{1/2} }{|\Lambda_{j}^{*}(s,r)|^{1/2}} 
  \cdot \big[ \nu \sigma_0^2
+ \Upsilon_{j,k}(s,r) \big]^{-\nu-n/2-1}.
\end{equation}
where
\vspace{-3.5em}

\begin{align*}
&\Upsilon_{j,k}(s,r)=\left\{ \vect{d}_{j,k}'\vect{d}_{j,k} 
  - [\vect{\mu}_{j,k}^*(s,r)]'   \Lambda_{j}^{*}(s,r)\vect{\mu}_{j,k}^*(s,r)\right\}/2,\\
\Lambda_{j}^*(s,r) &= X(s,r)'X(s,r) + \Lambda_{j}, \quad \text{and} \quad \vect{\mu}_{j,k}^*(s,r) = [\Lambda_{j}^*(s,r)]^{-1} [
 X(s,r)' \vect{d}_{j,k}  ].
 \end{align*}
\vspace{-2.5em}

\begin{theorem}
\label{thm:post_anova}
 The joint posterior on  $\{z_{j,k},
\bbet_{j,k}, S_{j,k}, R_{j,k}, \sigma_{j,k}^2 : (j,k)
\in \mathcal{T} \}$ is as follows.
\begin{itemize}
 \item The marginal posterior of the hidden states 
$ \{ (S_{j,k}, R_{j,k}) : (j,k)
\in \mathcal{T} \}$ is an MT defined on the product state-space $\{0,1\}\times\{0,1\}$ with 
\begin{enumerate}
 \item State transition probabilities:
 \vspace{-3.5em}
 
\begin{align*}
\hspace{-2em} \Pr(S_{j,k} =  s', R_{j,k} =  r' | S_{j-1,\lfloor k/2 \rfloor} = s, 
R_{j-1,\lfloor k/2 \rfloor} =
r,\mathcal{D})= \rho_{j,k}(s,s') \kappa_{j,k}(r,r') 
\phi_{j,k}(s',r')/\xi_{j,k}(s,r),
\end{align*}
\vspace{-3.5em}
 
\noindent for $j=1, \ldots, J$.
 \item Initial state probabilities:
 \vspace{-2em}
 
$$
\Pr(S_{0,0}=s, R_{0,0}=r | \mathcal{D}) = 
\rho_{0,0}(s) \kappa_{0,0}(r) 
\phi_{0,0}(s,r)/\xi_{0,0}(0,0).
$$
\end{enumerate}
\vspace{-0.5em}

\item The conditional posterior of $\sigma_{j,k}^2$ given $S_{j,k}$ and
$R_{j,k}$ is: 
\vspace{-2em}
 
$$
 [ \sigma_{j,k}^2 | S_{j,k}, R_{j,k}, \mathcal{D}] 
\sim \text{Inv-Gamma}\bigg( \nu + 1 + \dfrac{n}{2}, 
\nu \sigma_0^2
+ \Upsilon_{j,k}(S_{j,k},R_{j,k})    \bigg).
$$

\item The posterior of $z_{j,k},
\beta_{j,k}^{(2)}, \ldots, \beta_{j,k}^{(G)}$ given $S_{j,k}$, $R_{j,k}$ and
$\sigma_{j,k}^2$ is given as follows 
\vspace{-3em}

\begin{align*}
&[ z_{j,k}, \beta_{j,k}^{(2)}, \ldots, \beta_{j,k}^{(G)} | 
  \sigma_{j,k}^2, S_{j,k}, R_{j,k},
 \mathcal{D}]\\ 
& \hspace{5em} \sim {\rm N}\bigg( \vect{\mu}^*_{j,k}(S_{j,k},R_{j,k}), \sigma_{j,k}^2\,M(S_{j,k},R_{j,k}) \circ  [
 \Lambda_{j}^*(S_{j,k},R_{j,k}) ]^{-1} \bigg).
\end{align*}
\vspace{-3em}

\end{itemize}
where $\circ$ represents the Hadamard product, and for any matrix $A$. The mappings $\phi_{j,k}(s,r)$ and $\xi_{j,k}(s,r): \{0,1 \} \times \{ 0, 1 \}
\to [0, + \infty) $ can be computed recursively through a pyramid algorithm as follows:
\vspace{-2em}

\begin{align*}
\phi_{j,k}(s,r)& =
\left\{ 
\begin{array}{ll}
 m_{j,k}(s,r) \cdot \xi_{j+1,2k}(s,r) \cdot \xi_{j+1,2k+1}(s,r) & \text{for
$j=0,1,2\ldots,J-1$}\\
m_{j,k}(s,r) & \text{for $j=J$,}
\end{array}  \right. \\
 \xi_{j,k}(s,r) & = 
\left\{
\begin{array}{ll}
 \sum_{ s', r' }
\rho_{j,k}(s,s') \cdot \kappa_{j,k}(r,r') \phi_{j,k}(s',r')  & \text{for
$j=1,2,\ldots,J$}\\
\sum_{s', r'} \rho_{0,0}(s') \cdot \kappa_{0,0}(r')\cdot
\phi_{0,0}(s',r') & \text{for $j=0$.}
\end{array}
\right.
\end{align*}
\end{theorem}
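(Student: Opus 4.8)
The plan is to recognize the NIG-MG as a hidden Markov tree on the enlarged product state space $\{0,1\}\times\{0,1\}$ and to run the same two-part argument used for Theorem~\ref{thm:post_hmt}, now with the scalar state $s$ replaced by the pair $(s,r)$ and the scalar coefficient $z_{j,k}$ replaced by the vector $\vect{\theta}_{j,k}=(z_{j,k},\beta_{j,k}^{(2)},\ldots,\beta_{j,k}^{(G)})'$. Two observations make this reduction exact. First, by~\eqref{eq:r_tran_mat} the indicators $\{S_{j,k}\}$ and $\{R_{j,k}\}$ are \emph{a priori} independent Markov trees, so the joint process $(S_{j,k},R_{j,k})$ is itself a Markov tree whose transition kernel factorizes as the product $\rho_{j,k}(s,s')\,\kappa_{j,k}(r,r')$ and whose initial law is $\rho_{0,0}(s)\kappa_{0,0}(r)$. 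Second, given the states, the node-level models $\vect{d}_{j,k}=X\vect{\theta}_{j,k}+\vect{u}_{j,k}$ are conditionally independent across nodes. I would therefore split the proof into (i) a node-level NIG conjugacy computation and (ii) a tree-propagation (sum-product) argument, then combine them.

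\textbf{Step (i): node-level conjugacy.} Fix $(S_{j,k},R_{j,k})=(s,r)$. Then the active entries of $\vect{\theta}_{j,k}$ follow a Gaussian linear model with design $X(s,r)$ and Gaussian prior of precision $\Lambda_j$ (the spike-and-slab of~\eqref{eq:r_spike_and_slab} forcing the inactive entries to be degenerate at $0$), while $\sigma_{j,k}^2$ carries the inverse-Gamma prior~\eqref{eq:inverse_gamma_prior}. Completing the square in $\vect{\theta}_{j,k}$ gives the posterior precision $\Lambda_j^*(s,r)=X(s,r)'X(s,r)+\Lambda_j$ and mean $\vect{\mu}_{j,k}^*(s,r)$; integrating $\vect{\theta}_{j,k}$ out leaves an inverse-Gamma whose scale has grown by $\Upsilon_{j,k}(s,r)$, and integrating $\sigma_{j,k}^2$ out as well yields the marginal likelihood $m_{j,k}(s,r)$ of~\eqref{eq:marginal_like}, with the determinant ratio $|\Lambda_j|^{1/2}/|\Lambda_j^*(s,r)|^{1/2}$ arising as the usual NIG normalizing constant. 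The Hadamard mask $M(s,r)$ simply records which coordinates are active: when $s=0$ (resp.\ $r=0$) it zeros the corresponding rows and columns of $[\Lambda_j^*(s,r)]^{-1}$, so the posterior for an inactive coefficient is degenerate at $0$, exactly reproducing the claimed conditional posteriors for $\sigma_{j,k}^2$ and $(z_{j,k},\beta_{j,k}^{(2)},\ldots,\beta_{j,k}^{(G)})$. This part is routine Bayesian linear-model algebra once the masking is handled.

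\textbf{Step (ii): tree propagation.} For each node define $\phi_{j,k}(s,r)$ to be the marginal likelihood of all data in the subtree rooted at $(j,k)$ given $(S_{j,k},R_{j,k})=(s,r)$, and $\xi_{j,k}(s,r)$ the same quantity after one transition down from a parent in state $(s,r)$. Conditional independence of the node emissions given the state tree gives the subtree recursion $\phi_{j,k}(s,r)=m_{j,k}(s,r)\cdot\xi_{j+1,2k}(s,r)\cdot\xi_{j+1,2k+1}(s,r)$ (and $\phi_{J,k}=m_{J,k}$ at the leaves), while summing the child state against the product transition gives $\xi_{j,k}(s,r)=\sum_{s',r'}\rho_{j,k}(s,s')\kappa_{j,k}(r,r')\phi_{j,k}(s',r')$ --- precisely the stated pyramid. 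Bayes' rule on the tree then gives the posterior joint transition as proportional to $\rho_{j,k}(s,s')\kappa_{j,k}(r,r')\phi_{j,k}(s',r')$, and the normalizer of this quantity over $(s',r')$ is exactly $\xi_{j,k}(s,r)$; the root case uses $\xi_{0,0}(0,0)$, which is also the total marginal likelihood. Combining with Step (i) exhibits the joint posterior as a posterior product-state Markov tree times conditionally independent node-wise NIG posteriors, i.e.\ an NIG-MG.

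\textbf{Main obstacle.} The node-level conjugacy is mechanical; the genuine content is the bookkeeping in Step (ii) --- verifying that the single array $\xi_{j,k}(s,r)$ serves simultaneously as the local normalizer of the posterior transition at $(j,k)$ and as the upward message consumed by its parent, so that the local multiplicative updates $\rho\kappa\,\phi/\xi$ compose into one globally normalized posterior. I would settle this by induction from the leaves $j=J$ upward, checking at each level that $\sum_{s',r'}\rho_{j,k}(s,s')\kappa_{j,k}(r,r')\phi_{j,k}(s',r')=\xi_{j,k}(s,r)$ and hence that the product of local posterior transitions telescopes correctly. The enlargement to the product state space and to a vector coefficient $\vect{\theta}_{j,k}$ introduces no new difficulty beyond carefully propagating the degenerate $s=0$ and $r=0$ cases through the masks $X(s,r)$ and $M(s,r)$.
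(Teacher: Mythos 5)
Your proposal is correct and follows essentially the same route as the paper's proof: it identifies $\phi_{j,k}(s,r)$ and $\xi_{j,k}(s,r)$ as subtree marginal likelihoods conditional on the node's own state and on the parent's state respectively, derives the posterior transition probabilities by Bayes' rule with $\xi_{j,k}(s,r)$ as the local normalizer, verifies the leaf-to-root recursions via conditional independence of the node emissions, and defers the conditional posteriors of $(z_{j,k},\bbet_{j,k},\sigma_{j,k}^2)$ to standard normal-inverse-Gamma linear-model conjugacy. The only difference is that you spell out the product-state-space reduction and the masking by $X(s,r)$ and $M(s,r)$ in more detail than the paper does, which is a faithful elaboration rather than a different argument.
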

\vspace{0.5em}

Once the joint posterior is computed following the theorem, Bayesian inference can proceed in the usual manner. In particular, for testing the presence of a variance contribution from the factor, $R_{j,k}$ is an indicator for whether the null hypothesis at location-scale $(j,k)$: 
\vspace{-1.5em}

\[
H_{j,k}: \beta_{j,k}^{(1)}=\beta_{j,k}^{(2)}=\cdots = \beta_{j,k}^{(G)} = 0
\]
\vspace{-2em}

\noindent is false. Thus $P(R_{j,k}=1\,|\,\D)$ represents the posterior probability for the factor to contribute to the variation at location-scale $(j,k)$. For this reason, we shall refer to $P(R_{j,k}=1\,|\,\D)$ as the {\em posterior marginal alternative probability} (PMAP) for location-scale $(j,k)$. 
Next we show how to compute PMAPs through a top-down pyramid algorithm.
\begin{cor}
\label{cor:pmap}
 For $j=1,2,\ldots,J$, the posterior marginal distribution of $(S_{j,k},R_{j,k})$ can be computed recursively as
 \vspace{-3.5em}
 
\begin{align*}
&\Pr(S_{j,k}=s',R_{j,k}=r'\,|\,\D)\\ 
=&\sum_{s,r} \Pr(S_{j-1,\lfloor k/2 \rfloor}\!=\!s,R_{j-1,\lfloor k/2 \rfloor}\!=\!r\,|\,\D)\! \times\! \Pr(S_{j,k} \!=\!  s', R_{j,k}\! =\!  r' | S_{j-1,\lfloor k/2 \rfloor} \!=\! s, 
R_{j-1,\lfloor k/2 \rfloor} =
r,\D)
\end{align*}
\vspace{-3em}

\noindent using the initial and transition probabilities given in Theorem~\ref{thm:post_anova}. Then the PMAPs are given by $\Pr(R_{j,k}=1\,|\,\D)= \sum_{s}\Pr(S_{j,k}=s,R_{j,k}=r\,|\,\D)$. 
\end{cor}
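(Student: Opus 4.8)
The plan is to treat this as a routine forward marginalization along the tree, whose only real content is the Markov property supplied by Theorem~\ref{thm:post_anova}. First I would invoke Theorem~\ref{thm:post_anova}, which states that the posterior law of the joint hidden states $\{(S_{j,k},R_{j,k}):(j,k)\in\T\}$ is itself a Markov tree on the product state space $\{0,1\}\times\{0,1\}$. The defining feature I want to extract is the Markov property: conditional on the state $(S_{j-1,\lfloor k/2\rfloor},R_{j-1,\lfloor k/2\rfloor})$ of its parent (and on $\D$), the state $(S_{j,k},R_{j,k})$ is independent of the states at all more distant ancestors, with conditional law given by the posterior transition probabilities displayed in the theorem.

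Second, I would establish the recursion by the law of total probability, conditioning on the parent's state. Writing the parent index as $p=(j-1,\lfloor k/2\rfloor)$, I have
\[
\Pr(S_{j,k}=s',R_{j,k}=r'\mid\D)=\sum_{s,r}\Pr(S_{j,k}=s',R_{j,k}=r',S_{p}=s,R_{p}=r\mid\D).
\]
Factoring each summand by the chain rule as a product of the parent's joint marginal and the conditional law of $(S_{j,k},R_{j,k})$ given the parent, and then replacing that conditional law by the one-step posterior transition probability---which is legitimate precisely because of the Markov property extracted in the previous step---yields exactly the stated recursion. The base case is the root marginal $\Pr(S_{0,0}=s,R_{0,0}=r\mid\D)$, already given in Theorem~\ref{thm:post_anova}, so the recursion is well defined and runs top-down in the order $j=0,1,\ldots,J$; at each node only the parent's marginal, computed at the previous level, is needed.

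Finally, the PMAP formula follows by a single further marginalization: summing the level-$j$ joint marginal over $s\in\{0,1\}$ with $r'=1$ fixed gives $\Pr(R_{j,k}=1\mid\D)=\sum_{s}\Pr(S_{j,k}=s,R_{j,k}=1\mid\D)$, again by the law of total probability. I do not anticipate a genuine obstacle here; the only point requiring care is the justification that the posterior conditional law of a node given its parent does not depend on deeper ancestors, and this is exactly what the Markov-tree conclusion of Theorem~\ref{thm:post_anova} supplies---without it the chain-rule factorization would not collapse to the displayed transition probability. A small bookkeeping remark worth including is that, since each level-$j$ marginal is obtained from the single parent marginal at level $j-1$ by summing over the four parent states, the whole top-down sweep costs $O(T)$ operations, matching the complexity of the bottom-up pass of Theorem~\ref{thm:post_anova}.
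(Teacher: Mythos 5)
Your argument is correct and is exactly the reasoning the paper relies on: the corollary is an immediate consequence of Theorem~\ref{thm:post_anova}'s identification of the posterior as a Markov tree, via the law of total probability over the parent's state (the paper gives no separate proof, treating it as such). Your only substantive addition is the explicit appeal to the Markov property to justify the one-step conditional, which is already packaged into Theorem~\ref{thm:post_anova}'s statement of the posterior transition probabilities; note also that the final display in the corollary should read $r=1$ rather than a free $r$, as you correctly fix.
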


In the next corollary, we show how to compute the posterior probability for the presence of factor effects at {\em any} (i.e., at least one) location-scale combination. This probability, which we refer to as the posterior joint alternative probability (PJAP) can be used for testing the ``global'' null hypothesis that the factor does not contribute to the variation at all.
\begin{cor}
\label{cor:pjap}
For all $(j,k)\in\T$ and $s\in\{0,1\}$, let 
\vspace{-2.7em}

\[ \varphi_{j,k}(s) = P(R_{j',k'} = 0 \text{ for all $(j',k')\in\T(j,k)$} \,|\,S_{j-1,\lfloor k/2 \rfloor} = s,R_{j-1,\lfloor k/2 \rfloor} = 0,\D)\]
\vspace{-2.7em}

\noindent where $\T(j,k)$ denotes the subtree in $\T$ with $(j,k)$ as the root, i.e., $\T(j,k)$ includes $(j,k)$ and all of its descendants in $\T$.
Then we can compute $\varphi_{j,k}(s)$ by the following pyramid algorithm
\vspace{-5em}

\begin{align*}
&\varphi_{j,k}(s) \\
=& \begin{cases}
\sum_{s'}\Pr(S_{j,k} =  s', R_{j,k} =  0 | S_{j-1,\lfloor k/2 \rfloor} = s, R_{j-1,\lfloor k/2 \rfloor}=0,\D) \varphi_{j+1,2k}(s')\varphi_{j+1,2k+1}(s') & \text{for $j<J$}\\
\sum_{s'}\Pr(S_{j,k} =  s', R_{j,k} =  0 | S_{j-1,\lfloor k/2 \rfloor} = s, R_{j-1,\lfloor k/2 \rfloor}=0,\D) & \text{for $j=J$.}
\end{cases}
\end{align*}
The posterior joint null probability (PJNP) is given by
\vspace{-1.5em}

\[
\Pr(R_{j,k}=0 \text{ for all $(j,k)\in\T$}\,|\,\D) = \varphi_{0,0}(0).
\]
\vspace{-2em}

\noindent Accordingly, the posterior joint alternative probability (PJAP) is $1-\varphi_{0,0}(0)$.
\end{cor}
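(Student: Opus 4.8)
The plan is to prove the recursion by induction on the tree level, descending from the leaves ($j=J$) to the root, with the essential ingredient being the conditional independence of disjoint subtrees guaranteed by the Markov tree structure of the posterior established in Theorem~\ref{thm:post_anova}. For a node $(j,k)$, write $E(j,k)$ for the event $\{R_{j',k'}=0\text{ for all }(j',k')\in\T(j,k)\}$, so that by definition $\varphi_{j,k}(s)=\Pr(E(j,k)\mid S_{j-1,\lfloor k/2\rfloor}=s, R_{j-1,\lfloor k/2\rfloor}=0,\D)$ and the target PJNP equals $\Pr(E(0,0)\mid\D)$ since $\T(0,0)=\T$.

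First I would record the base case. For a leaf $(J,k)$ the subtree $\T(J,k)$ is the single node $(J,k)$, so $E(J,k)=\{R_{J,k}=0\}$; marginalizing over the node's own $S$-state while forcing $R_{J,k}=0$ gives $\varphi_{J,k}(s)=\sum_{s'}\Pr(S_{J,k}=s',R_{J,k}=0\mid S_{J-1,\lfloor k/2\rfloor}=s,R_{J-1,\lfloor k/2\rfloor}=0,\D)$, which matches the stated $j=J$ formula.

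For the inductive step at a node $(j,k)$ with $j<J$, I would use the decomposition $\T(j,k)=\{(j,k)\}\cup\T(j+1,2k)\cup\T(j+1,2k+1)$, so that $E(j,k)=\{R_{j,k}=0\}\cap E(j+1,2k)\cap E(j+1,2k+1)$. Conditioning on the parent state and summing over the node's own $S$-state (with $R_{j,k}=0$ fixed), the chain rule expresses $\varphi_{j,k}(s)$ as a sum over $s'$ of $\Pr(S_{j,k}=s',R_{j,k}=0\mid S_{j-1,\lfloor k/2\rfloor}=s,R_{j-1,\lfloor k/2\rfloor}=0,\D)$ times $\Pr(E(j+1,2k)\cap E(j+1,2k+1)\mid S_{j,k}=s',R_{j,k}=0,\ldots,\D)$. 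The crux is to factor this second probability as $\varphi_{j+1,2k}(s')\,\varphi_{j+1,2k+1}(s')$. This is exactly where the Markov tree property enters: because Theorem~\ref{thm:post_anova} shows the posterior joint law of $\{(S_{j,k},R_{j,k})\}$ factorizes over the edges of $\T$, conditioning on the full node state $(S_{j,k},R_{j,k})=(s',0)$ renders the two child subtrees mutually independent and independent of the grandparent, and each resulting factor coincides with $\varphi$ at the respective child by its definition. I expect this decoupling step---justifying that conditioning on $(s',0)$ at $(j,k)$ both separates the two child subtrees and erases the dependence on $(S_{j-1,\lfloor k/2\rfloor},R_{j-1,\lfloor k/2\rfloor})$---to be the main (though standard) obstacle; the rest is bookkeeping.

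Finally I would handle the root. Since $(0,0)$ has no parent, the conditioning in the definition of $\varphi_{0,0}$ is vacuous and the transition probability is replaced by the posterior initial probability $\Pr(S_{0,0}=s',R_{0,0}=0\mid\D)$ supplied by Theorem~\ref{thm:post_anova}; applying the same decomposition with $\T(0,0)=\T$ yields $\varphi_{0,0}(0)=\sum_{s'}\Pr(S_{0,0}=s',R_{0,0}=0\mid\D)\,\varphi_{1,0}(s')\,\varphi_{1,1}(s')=\Pr(E(0,0)\mid\D)$. This establishes PJNP $=\varphi_{0,0}(0)$, and hence PJAP $=1-\varphi_{0,0}(0)$.
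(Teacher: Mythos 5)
Your proof is correct. The paper itself supplies no proof of this corollary (only Theorems~\ref{thm:post_hmt} and \ref{thm:post_anova} are proved in the supplement), but your leaf-to-root induction---whose key step is that conditioning on the full node state $(S_{j,k},R_{j,k})=(s',0)$ renders the two child subtrees mutually independent and independent of the ancestor states, exactly the Markov-tree factorization of the posterior established in Theorem~\ref{thm:post_anova}---is precisely the argument the corollary implicitly relies on, with the root case handled correctly via the posterior initial probabilities.
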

\noindent Remark: Corollary~\ref{cor:pmap} and \ref{cor:pjap} can also be applied to the prior model to get the prior marginal alternative probabilities, which can be used to elicit the prior specification on the hyperparameters. See Section~\ref{sec:prior} for more details.
\vspace{0.5em}

Theorem~\ref{thm:post_anova} allows us to draw posterior samples of $\{z_{j,k},
\bbet_{j,k}, S_{j,k}, R_{j,k}, \sigma_{j,k}^2 : (j,k)\in\T\}$ using standard Monte Carlo (not MCMC). Based on this posterior sample, we can also complete other inference tasks such as computing the posterior mean of $\vect{b}^{(g)}$ and constructing credible bands for $\vect{b}^{(g)}$ which quantifies the posterior uncertainty of the factor contribution to the functional variation. This can be achieved by applying inverse DWT to the posterior draws of $\bbet_{j,k}$'s. We will illustrate this in the numerical examples. 

\vspace{1em}

{\it Multi-way fANOVA.} The NIG-MG model for one-way fANOVA can be naturally extended to the case with multiple factors by specifying one MT for each factor to capture the location-scale clustering of each factor effect. The complication is mainly in the notation. 

Suppose now we have $L$ factors, and the $l$th factor has $G_{l}$ levels. Now suppose for each factor combination $\bg=(g_1,g_2,\ldots,g_L)\in \{1,2,\ldots,G_{1}\}\times \{1,2,\ldots,G_{2}\} \times \cdots \times \{1,2,\ldots,G_{L}\}$, we have $n_{\bg}$ independent functional observations $\by^{(\bg,i)}=(y_1^{\bg,i},y_2^{\bg,i},\ldots,y_{T}^{\bg,i})$ for $i=1,2,\ldots,n_{\bg}$, and suppose each observation arise from the following model
\vspace{-2.5em}

\begin{equation}\label{eq:multiway_model}
 \begin{split}
  \vect{y}^{(\bg,i)} & = \vect{f}_1^{(g_1)} + \vect{f}_2^{(g_2)} + \cdots + \vect{f}_L^{(g_L)} + \vect{\epsilon}^{(\bg,i)} \\ 
\vect{\epsilon}^{(\bg,i)} & \sim {\rm N}(\vect{0}, \Sigma_{\epsilon}),
 \end{split}
\end{equation}
\vspace{-2.2em}

\noindent where $g=1, \ldots, G$ is the group index, $i = 1, \ldots, n_g$ is the index for the replicates in the $g$th group,  and $\Sigma_{\epsilon} =
\text{diag}(\sigma_{1}^2, \ldots, \sigma_T^2)$.

Now let $\vect{f} = \sum_{l=1}^{L}\vect{f}_l^{(1)}$ and $\vect{b}^{(g_l)}_{l} = \vect{f}_{l}^{(g_l)}-\vect{f}_{l}^{(1)}$ for $g_l=1,2,\ldots,G_{l}$. Then
\vspace{-1.3em}

\[
\by^{(\bg,i)} = \vect{f} + \sum_{l=1}^{L} \vect{b}^{(g_l)}_{l} + \vect{\epsilon}^{(\bg,i)}.
\]
\vspace{-2.2em}

\noindent After applying the DWT we obtain:
\vspace{-1.5em}

$$
\vect{d}^{(\bg,i)}  = 
  \vect{z} + \bbet_{l}^{(g_l)} + \vect{u}^{(\bg,i)} 
$$
\vspace{-3em}

\noindent where $\bz=W\vect{f}$, $\bbet_{l}^{(g_l)} = W \vect{b}_l^{(g_l)}$, and $\vect{u}^{(\bg,i)}=W\vect{\epsilon}^{(\bg,i)}$ for $\bg$ and $i=1,2,\ldots,n_{\bg}$. Again the corresponding node-specific model is
\vspace{-1em}

$$
d_{j,k}^{(\bg,i)} = 
 z_{j,k} + \sum_{l=1}^{L}\beta_{l\,j,k}^{(g_l)} + u_{j,k}^{(\bg,i)} \quad \text{where } u_{j,k}^{(\bg,i)} \ind {\rm N}(0,\sigma_{j,k}^2).
$$
Just as in the one-way case, in addition to the latent indicators $S_{j,k}$ for $z_{j,k}$, we introduce an indicator $R_{l\,j,k}$ for each factor effect such that $R_{l\,j,k}=1$ if and only if 
\vspace{-1.8em}

\[
H_{l\,j,k}: \beta_{l\,j,k}^{(1)}=\beta_{l\,j,k}^{(2)}=\cdots = \beta_{l\,j,k}^{(G_l)} = 0 \quad \text{ is false.}
\]
\vspace{-2.5em}

\noindent Because each factor contribution will be correlated across location and scale, we adopt an MT model on each of $\{S_{j,k}:(j,k)\in \T\}$ and $\{R_{l\,j,k}:(j,k)\in\T\}$ for $l=1,2,\ldots,L$, to induce the proper location-scale dependency. This results in a ``grove'' of $L+1$ Markov trees,
\vspace{-3.2em}

\begin{align*}
\{S_{j,k}:(j,k)\in \T\} &\sim {\rm MT}(\brho_{j,k}) \quad \text{and} \quad \{R_{l\,j,k}:(j,k)\in\T\} \sim {\rm MT}(\bkap_{j,k}) \quad  \text{for $l=1,2,\ldots,L$},
\end{align*}
\vspace{-3.0em}

\noindent which are independent given the transition matrices $\brho_{j,k}$ and $\bkap_{j,k}$. We specify $\brho_{j,k}$ and $\bkap_{j,k}$ as in Eq.~\eqref{eq:trans_matrix_wave}. We allow the hyperparameters $(\eta,\gamma)$ to be different for $\bkap_{j,k}$ than for $\brho_{j,k}$. This is necessary because the sparsity (as characterized by $\eta$) as well as the spatial-dependency (as characterized by $\gamma$) can be very different for the baseline mean function than for factor effects. For example, the baseline function may be a smooth function, while a factor effect consists of spiky disturbances to the baseline. We will investigate these scenarios in the numerical studies. Specifically, we let $(\eta_{\rho},\gamma_{\rho})$ and $(\eta_{\kappa},\gamma_{\kappa})$ respectively denote the corresponding hyperparameters for $\brho_{j,k}$ and $\bkap_{j,k}$. We discuss prior specification in Section~\ref{sec:prior}. 

The hyperparameter sharing that we enforce here by specifying the same prior transition matrix $\bkap_{j,k}$ for the factor effects not only helps attain parsimony but is reasonable from a modeling perspective as well. In most applications there is no {\em prior} reason to believe any factor contribution to have a different spatial-scale dependency pattern than any other. In situations where one indeed has reasons to believe that the spatial dependency is different among the factor effects, we can specify a different prior transition matrix for the the factors. 

Finally, we still adopt the NIG specification on $z_{j,k}$, $\beta_{l\,j,k}^{(g_l)}$, and $\sigma^2_{j,k}$ as in Eqs.~\eqref{eq:spike_and_slab},
\eqref{eq:inverse_gamma_prior}, and \eqref{eq:r_spike_and_slab}. In particular, we adopt a different scaling parameter $\upsilon_l$ for each factor effect
\vspace{-1.5em}

\[ [ \beta_{l\,j,k}^{(2)}, \ldots, \beta_{l\,j,k}^{(G_l)} | \sigma_{j,k}^2, R_{l\,j,k} ] \iid
 {\rm N}(0, R_{l\,j,k}\cdot  2^{-\alpha j} \upsilon_{l}\cdot \sigma_{j,k}^2)
\]
\vspace{-2.5em}

\noindent because the effect sizes can often be substantially different among the factors.
We now have a fully specified joint model, which is again an NIG-MG consisting of $(L+1)$ Markov trees. 

Bayesian inference under this model proceeds in exactly the same fashion as that for one-way fANOVA. Specifically, the marginal likelihood of the node-specific model given $S_{j,k}=s$ and $R_{l\,j,k}=r_{l\,j,k}$ for all $l$, denoted as $m_{j,k}(s,r_1,r_2,\ldots,r_l)$, also takes the same form as before with the design matrix incorporating all of the factor information. The joint posterior is given by a variant of Theorem~\ref{thm:post_anova}. (See Supplementary Materials~S2 for details.) Similarly, the PMAPs for each factor, $\Pr(R_{l\,j,k}=1\,|\,\D)$, can be computed analytically through pyramid recursion, as well as the PJAP, $\Pr(R_{l\,j,k}=1 \text{ for some $(j,k)\in\T$})$, following Corollary~\ref{cor:pmap} and Corollary~\ref{cor:pjap}. Samples can be drawn from the joint posterior through standard Monte Carlo. 

\vspace{-1em}

\subsection{Prior specification}
\label{sec:prior}
\vspace{-0.5em}

The NIG-MG is specified by the following hyperparameters: $\bthe=(\alpha,\tau,\bups,\sigma_0,\nu,\eta_{\rho},\gamma_{\rho},\eta_{\kappa},\gamma_{\kappa})$, where $\bups=(\upsilon_1,\upsilon_2,\ldots,\upsilon_{L})$. In determining the choice of these hyparameters, it is important to note their interpretations. In particular, the hyperparameters fall into two categories, respectively called the {\em scaling} parameters and the {\em sparsity} parameters. We discuss their specification in turn. 
\vspace{0.25em}

{\em Scaling hyparameters.} This category consists of $\bthe_{scaling}=(\alpha,\tau,\bups,\sigma_0,\nu)$, and they characterize the scale (or size) of either the mean or the factor contribution relative to the size of the errors. Their proper specification depends on the underlying smoothness of the function means and factor effects as well as the signal to noise ratio. We recommend setting these parameters in a data-adaptive manner through empirical Bayes (described later).
\vspace{0.25em}

{\em Sparsity hyparameters.} This category consists of the parameters that determine the prior distributions of the latent indicator variable $S_{j,k}$ and $R_{l\,j,k}$. That is, the parameters that specify the state transition matrices of the MTs: $\bthe_{sparsity}=(\eta_{\rho},\gamma_{\rho},\eta_{\kappa},\gamma_{\kappa})$. In particular, they determine the {\em a priori} probability for the presence of ``signal'' at each location-scale combination, and hence they tune the sparsity of the underlying signal, in terms of the proportion of locations-scale combinations with signals. In particular, they determine the quantities such as the prior probability for the null, that there is no factor contribution at all, and the prior expected number of location-scale combinations on which there is a factor contribution, etc. Depending on the inferential goal at hand, different strategies for specifying the sparsity hyparameters can be adopted. 

Specifically, if one's goal is for estimation and prediction, such as in signal denoising and image reconstruction, then a simple strategy is again to choose them by some data-fitting criteria such as empirical Bayes. Note that once a data-adaptive approach is taken in choosing the hyparameters, their face-values lose meanings. For example, one cannot interpret the ``prior'' null probabilities as before because it is chosen based on the data. 

In estimation and prediction one does not care so much about the prior interpretation of these parameters as finding the parameters that render the best predictive performance. Thus in such cases we can treat these sparsity hyparameters in the same way as the scaling parameters and choose them through empirical Bayes. If one's goal is hypothesis testing regarding factor effects, however, in order to maintain the ``validity'' of the test (such as its level from a frequentist perspective or the prior null probability from a Bayesian one), one should not use data to choose the sparsity parameters but should elicit them based on certain prior criteria. For example, one can choose $(\eta_{\kappa},\gamma_{\kappa})$ such that the prior joint alternative probability is say 50\%, as computed by applying Corollary~\ref{cor:pjap} to the prior model.  

In practical problems, one can use a hybrid strategy to specify the sparsity parameters. For example, if one is interested in testing the contribution of one or more factors, but not in the baseline, then one can specify $(\eta_{\kappa},\gamma_{\kappa})$ through prior elicitation, and use empirical Bayes to choose $(\eta_{\rho},\gamma_{\rho})$ which characterizes the baseline structure.
\vspace{0.5em}

{\em Empirical Bayes by maximum marginal likelihood.} We have referred multiple times to empirical Bayes as a strategy for choosing hyparameters. Specifically, a useful by-product of applying Theorems~\ref{thm:post_hmt} and \ref{thm:post_anova} is the overall marginal likelihood $P(\D\,|\,\bthe)=\xi_{0,0}(0)$, which can be computed through the pyramid algorithm. We can thus find the maximum marginal likelihood estimators (MMLE) for the hyperparameters $\hat{\bthe} = {\rm argmax}_{\bthe} P(\D\,|\,\bthe)$.
This optimization can be carried out using standard numerical methods. In our numerical studies, it is completed using the Nelder-Mead algorithm implemented in the {\tt R} function {\tt optim}. 

Alternatively we can use prior elicitation to choose $\bthe_{sparsity}$ and  MMLE to set the scaling parameters $\hat{\bthe}_{scaling}={\rm argmax}_{\bthe_{scaling}} P(\D\,|\,\bthe_{sparsity},\bthe_{scaling})$. We note that strictly speaking, the empirical Bayes strategy makes the uncertainty quantification such as credible intervals ``overly confident''. But because in the current context the total number of observations $nT$ is typically much much larger than the number of hyperparameters, this impact on the uncertainty quantification is often small.
\vspace{-1em}

\subsection{Decision rules and multiplicity control}
\label{sec:multiplicity}
\vspace{-0.5em}

Next we construct decision rules for calling ``significant'' factor effects. It is natural to reject the joint null hypothesis of no factor effects whatsoever when the PJAP is large for each factor of interest. But how large is large enough? The threshold for PJAP can either be chosen at a specific level such as 80\% provided that the prior, especially in terms of the sparsity parameters, is properly calibrated as described in the previous subsection. Alternatively, the PJAP threshold can be determined empirically by resampling strategies such as permutation.

In most applications, it is not only interesting to know whether a factor contributes to the variation at all but to understand the nature of such contribution. For example, what parts of the sample space is affected by that factor and at what scales. To this end, it is useful to consider decision rules for rejecting the node-specific hypotheses $\{H_{j,k}:(j,k)\in\T\}$ directly, and identify those locations-scale combinations $(j,k)$ at which the factor contributes to the variation. With this perspective, a natural decision rule for rejecting $H_{j,k}$ is when the PMAP $P(R_{j,k}=1\,|\,\D)>\delta$ for some threshold $\delta\in (0,1)$. Given any threshold $\delta$, the posterior expected {\em number of false positives} (NFP) regarding the $l$th factor, i.e., location-scale combinations that are called to be significant in the $l$th factor but for which the $l$th factor has no effect is given by
${\rm NFP}(\delta)=\sum_{(j,k):P(R_{j,k}=1\,|\,\D)>\delta} P(R_{j,k}=0\,|\,\D)$. Accordingly, the (Bayesian) false discovery rate (FDR) as defined in \cite{mueller_etal_2007} is given by 
\[
{\rm FDR}(\delta) = \frac{\sum_{(j,k):P(R_{j,k}=1\,|\,\D)>\delta} P(R_{j,k}=0\,|\,\D)}{|\{(j,k):P(R_{j,k}=1\,|\,\D)>\delta\}|}
\]
and thus one can choose $\delta$ to achieved the desired FDR.

\vspace{-1.5em}

\section{Numerical examples}
\label{sec:numerical_examples_fanova}
\vspace{-0.7em}

In this section we provide three numerical examples. In the first example we
apply the NIG-MT to denoising a single functional observation using the classical scenarios given in~\cite{donoho1994ideal}. In the second example we illustrate the work of NIG-MG for one-way fANOVA through simulation and compare it to existing wavelet-based fANOVA methods. Finally, we carry out two-way NIG-MG-based fANOVA for a well-known time-series data set, the ``orthosis'' data. In all the examples we use the Daubechies least-asymmetric orthonormal compactly
supported wavelet with 10 vanishing moments. 
\vspace{-1em}

\subsection{Function denoising}
\label{sec:denoising}
\vspace{-0.5em}

In this example we generate synthetic data from the four test functions proposed
by \cite{donoho1994ideal}, namely \emph{blocks}, \emph{bumps}, \emph{doppler}
and \emph{heavisine}. In \ref{fig:donoho_functions} we plot the four
functions and the associated mother wavelet coefficients. For each of the four test functions we consider four levels---1, 3, 5, and 7---of 
the root signal to noise ratio (RSNR):
${\rm RSNR} = \sqrt{ \sum_{t=1}^T (f_t - \bar{f})^2/(T-1)}/ \sigma$,
where $\bar{f} = \sum_t f_t/T$ and
$\Sigma_\epsilon = \text{diag}(\sigma^2, \ldots, \sigma^2)$. The observations
are taken at $T = 1024$ equidistant points, and for each function and each RSNR
level we generate 100 datasets.

In addition to our NIG-MT model, we apply two additional methods, one Bayesian and the other empirical Bayesian, for wavelet shrinkage---namely \cite{abramovich1998wavelet} and \cite{johnstone&silverman:2004}---to the simulated data as comparison. \cite{abramovich1998wavelet} is one of the early well-known Bayesian wavelet regression methods, while the two-group emprical Bayes method ``EBayesThresh'' introduced in \cite{johnstone&silverman:2004} is often regarded as the state-of-the-art in Bayesian wavelet denoising. We do not carry out a comprehensive study to the numerous available wavelet shrinkage methods, but include just these two related methods, especially since our main endeavor is for the fANOVA problem. For interested readers, a more extensive comparative study is available in \cite{johnstone&silverman:2004}, which showed that the {\tt EBayesThresh} approach compares favorably against existing methods. 

For each simulated functional observation we apply the three methods. We evaluate the performance of each method using the average (over location) mean square error (AMSE). For NIG-MT, we use empirical Bayes to set all of the hyperparameters. For the method of \cite{abramovich1998wavelet}, we set the two hyperparameters $\alpha=0.5$ and $\beta=1$ as recommended in that paper. The method is available in the R package {\tt wavethresh}. For the EBayesThresh method we use the Laplace prior for the wavelet coefficient under the alternative with the scale parameter of the Laplace set at $0.5$, which is the default value given in the {\tt R} package {\tt EBayesThresh}. Both {\tt wavethresh} and {\tt EBayesThresh} are available on {\tt CRAN}.

\ref{tab:function_denoising_amse} presents AMSE for each of the methods for all four signal functions at the four RSNR levels. The NIG-MT method outperforms the other methods in all simulation scenarios. The performance gain for doppler, bumps, and blocks are more substantial than that for heavisine, which is as expected because the wavelet coefficients for the first three functions display stronger spatial-scale dependency, and so exploiting such dependency is most rewarding.

\begin{table}[hp]
\doublespacing
\begin{center}
\begin{tabular}{lcccc}
\hline\hline
Doppler &\multicolumn{4}{c}{RSNR} \\
\cline{2-5}
&                          1     &   3   &     5  &        7         \\
NIG-MT     &  {\bf 5.2(0.7)}& \bf{1.1}(0.2)& \bf{0.47}(0.06)& \bf{0.27}(0.03)\\
EBayesThresh                &  6.2(0.9)& 1.2(0.2)& 0.52(0.08)& 0.32(0.03)\\
Bayes                   &  6.1(0.9)& 1.5(0.3)& 0.67(0.1) & 0.41(0.04)\\
\hline
Heavisine &\multicolumn{4}{c}{RSNR} \\
\cline{2-5}
&                          1     &   3   &     5  &        7         \\
NIG-MT    &   {\bf1.8(0.5)}& {\bf 0.51(0.1)} &  {\bf 0.26(0.04)}& {\bf 0.17(0.03)}\\
EBayesThresh                &  2.8(0.9)& 0.64(0.1)&  0.29(0.06)& 0.18(0.03)\\
Bayes                   &  2.3(0.8)& 0.71(0.1)&  0.4(0.06)&  0.26(0.04)\\

\hline
Bumps &\multicolumn{4}{c}{RSNR} \\
\cline{2-5}
&                          1     &   3   &     5  &        7         \\
NIG-MT     &  {\bf 17(2)}& {\bf 2.9(0.3)}& {\bf 1.3(0.1)}& {\bf 0.77(0.06)}\\
EBayesThresh                &  24(3)& 3.6(0.3)& 1.5(0.1)& 0.83(0.06)\\
Bayes                   &  30(3)& 5.5(0.5)& 2.9(0.2)& 2.2(0.2)  \\
\hline
Blocks &\multicolumn{4}{c}{RSNR} \\
\cline{2-5}
&                          1     &   3   &     5  &        7         \\

NIG-MT     &  {\bf 10(1)}& {\bf 2.1(0.2)}& {\bf 0.94(0.08)}& {\bf 0.55(0.05)}\\
EBayesThresh                &  12(1)& 3(0.2)&   1.3(0.1)&   0.73(0.07)\\
Bayes                   &  16(2)& 4.3(0.4)& 2.2(0.2)&   1.4(0.1)  \\
\hline\hline
\end{tabular}
\end{center}
\caption{AMSE ($\times 10^2$) of four methods---NIG-MT, EBayesThresh \cite{johnstone&silverman:2004} and Bayes \cite{abramovich1998wavelet} for four different signal functions at four RSNR levels, estimated from 100 simulations. Standard deviations of the AMSEs are given in parentheses. The lowest AMSE in each setting is in bold font.}
\label{tab:function_denoising_amse}
\end{table}

\vspace{-1em}

\subsection{Identifying functional variation}
\label{sec:simulations}
\vspace{-0.7em}

Next we carry out a simulation study to evaluate the performance of the NIG-MG method in fANOVA problems. In particular, we simulate from the one-way fANOVA model given in Eq.~\eqref{eq:oneway_model} and \eqref{eq:oneway_model2}. We consider the case with $G=3$ groups and each group has $n_g=3$ replicate observations. Following the notation in Eq.~\eqref{eq:oneway_model2}, we let $\vect{f}$ be the baseline mean for Group~1, and $\vect{b}^{(i)}$ be the contrast between Group~$i$ and Group~1 for $i=2,3$. For simplicity, we let $\vect{b}^{(2)}=-\vect{b}^{(3)}=\vect{b}$. That is, the difference between Group~2 and Group~1 is of the same magnitude as that of Group~3 and Group~1, but of the opposite sign. 

As in function denoising, the performance of a fANOVA method depends on the nature of the underlying function mean and factor effect. As such, we consider different scenarios in which $\vect{f}$ and $\vect{b}^{(i)}$ are of a variety of natures. The scenarios we consider fall into two broad categories. The first corresponds to the case when the cross-group difference is of a global nature, in the sense that they involve a large number of locations, whereas the second category corresponds to cases where the cross-group difference is local, involving only a very small subset of locations. We describe the different simulation scenarios in turn.

{\em Global factor effects.} In this case, we allow the baseline mean $\vect{f}$ and the factor difference $\vect{b}$ to be any of the four signature functions--doppler,heavisine,bumps,blocks, resulting in a total of 16 possible combinations. For each of the 16 combinations, we simulate 500 data sets, for RSNR=$0.5,1,1.5,2$. For each simulated data set, we also simulate a null data set resulting from setting $\vect{b}=\vect{0}$. We apply four methods---NIG-MG, the wfANOVA test~\cite{mckay:2013}, the tANOVA test~\cite{mckay:2013}, and the wavelet minimax test \cite{abramovich2004optimal}---to each of the simulated data set (both with a difference and the null set), and construct the ROC curve for the corresponding test statistic of each of the four methods. The wfANOVA is in essence the F-test applied to each location-scale combination. The tANOVA is the location-by-location F-test in the original space. The wavelet minimax test statistic is in essence the sum of squares of the wavelet coefficients after proper thresholding at the fine resolutions. For NIG-MG, we use the PJAP as the test statistic for the existence of a cross-group difference. 

\ref{fig:roc_cross_signal_rsnr1} and \ref{fig:roc_cross_signal_rsnr2} show the matrix of ROC curves for RSNR$=1$ and RSNR$=2$, respectively. (The ROC curves for RSNR$=0.5$ and RSNR$=1.5$ look similar and so we defer them to Supplementary Materials~S3. See \ref{fig:roc_cross_signal_rsnr0.5} and \ref{fig:roc_cross_signal_rsnr1.5}.) The NIG-MG outperforms the other methods by comfortable margins at all four RSNR levels. 
\begin{figure}[p]
 \centering
 \includegraphics[width= 1\textwidth]{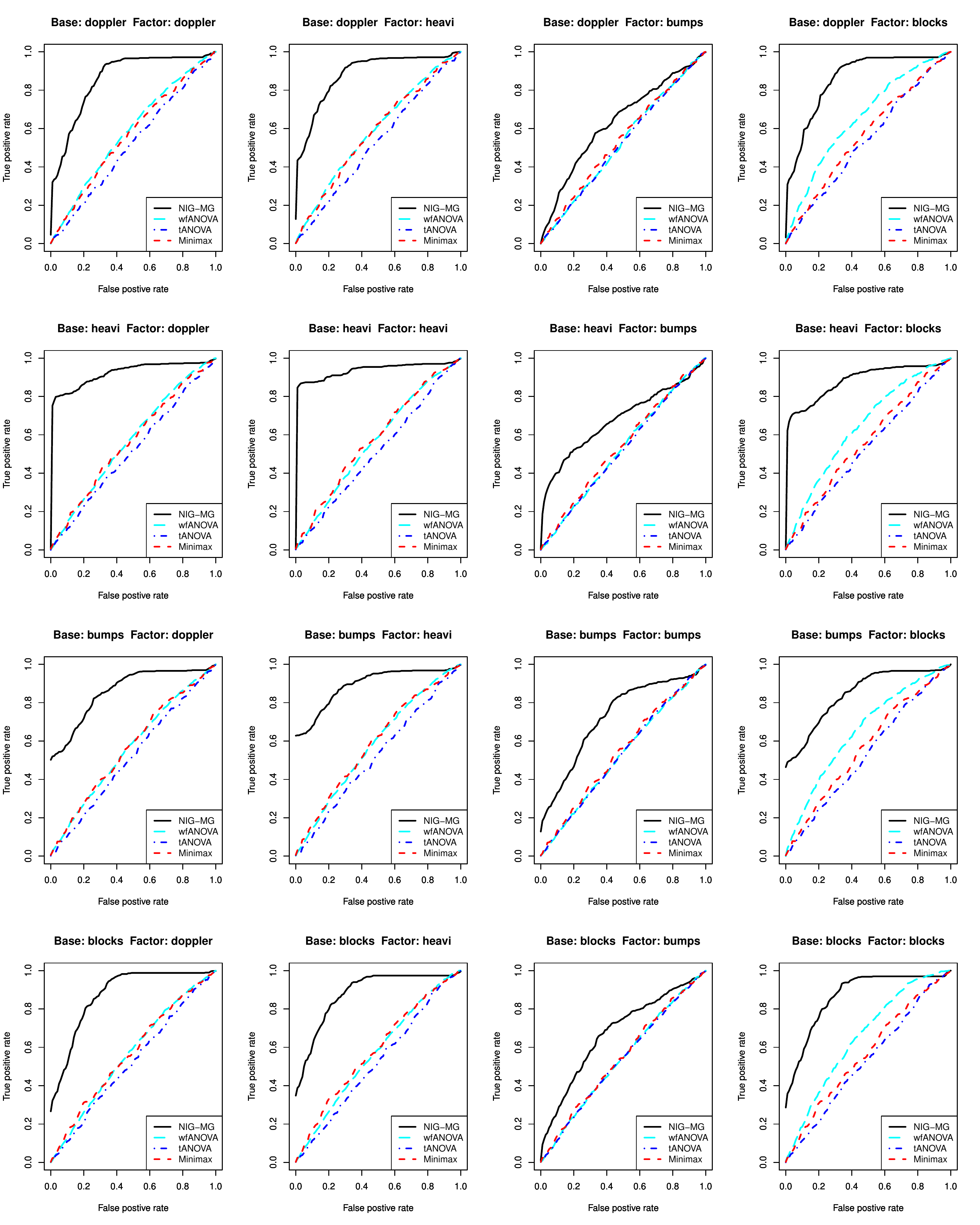}
 \caption{ROC curves (RSNR=$1$) for four test statistics when the baseline mean and the factor difference are all 16 combinations of the four signature functions.}
\label{fig:roc_cross_signal_rsnr1}
\end{figure}
\begin{figure}[p]
 \centering
 \includegraphics[width= 1\textwidth]{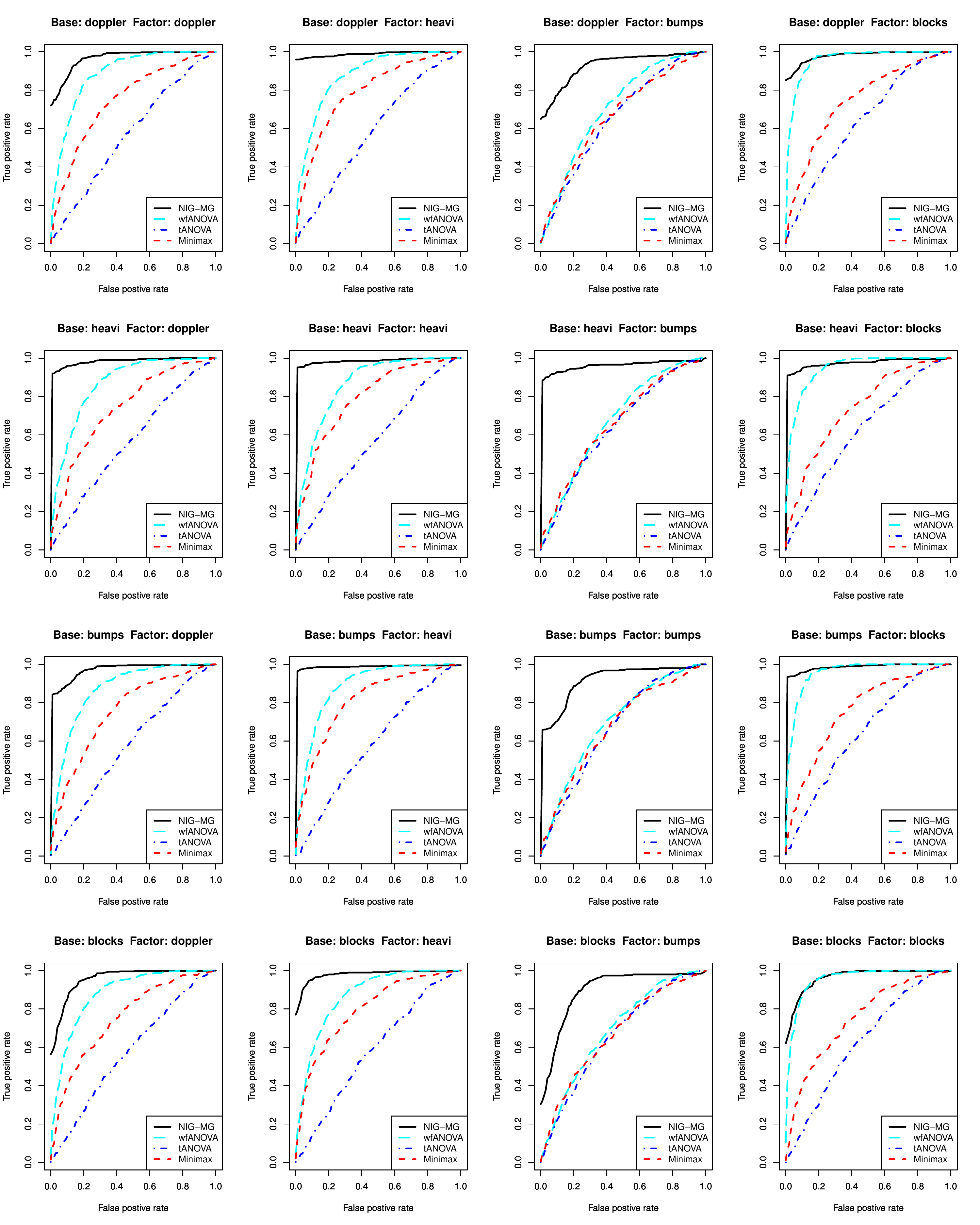}
 \caption{ROC curves (RSNR$=2$) for four test statistics when the baseline mean and the factor difference are all 16 combinations of the four signature functions.}
\label{fig:roc_cross_signal_rsnr2}
\end{figure}

{\em Local factor effects.} We then consider simulation scenarios in which the factor effects are of a local nature, involving only a small fraction of locations. We still let the baseline mean $\vect{f}$ to be any of the four signature functions. We let $\vect{b}$ be 0 for most locations, but for just a small interval, we let it be a constant proportion of $\vect{f}$. The mean functions for each of the three groups are plotted in the first column of  \ref{fig:roc_inject_signal}. 

We again apply the four methods to test for the existence of a cross-group difference. The second to fourth columns of \ref{fig:roc_inject_signal} show their ROC curves under three RSNR levels 2, 3, and 4. Again the performance advantage of NIG-MG is even more substantial than in the global difference scenarios. This shows the importance of incorporating the spatial-scale dependency when the underlying factor effects is local and so borrowing strength becomes critical to effectively identify such structures.
\begin{figure}[p]
 \centering
 \includegraphics[width= 0.98\textwidth]{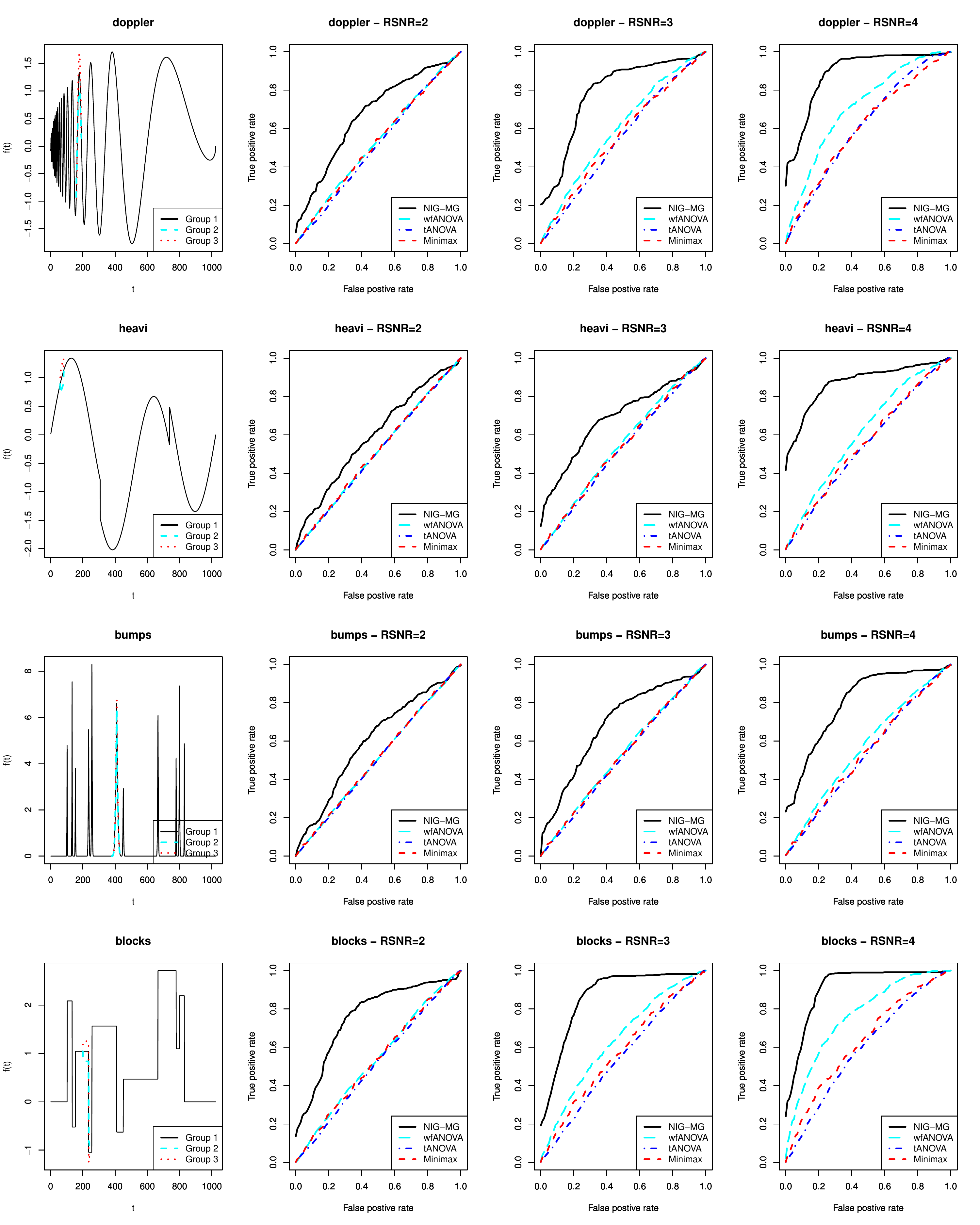}
 \vspace{-0.5em}
 \caption{ROC curves for four test statistics under the local cross-group difference scenarios. The rows correspond to different baseline mean functions---first: doppler; second: heavisine; third: bumps; four: blocks. The first columns shows the true mean functions of the three groups. The second to fourth columns correspond to RSNR=$2,3,4$ respectively.}
\label{fig:roc_inject_signal}
\end{figure}

\vspace{-1em}

\subsection{Orthosis Dataset}
\vspace{-0.5em}

We apply our NIG-MG model to analyze the
\emph{orthosis dataset}, a publicly available data original collected by Dr. David Amarantini and
Dr. Luc Martin from the Laboratoire Sport et Performance Motrice, Grenoble
University, France. This data has been used by several authors as a test-bed for functional data analysis methods \cite{abramovich2004optimal,abramovich2006testing,antoniadis2007estimation,zhang:2014}.  
The purpose of the study was to understand the effect of different types of
constraints to the knee on movement generation. In the study 7
individuals (i.e., the subjects) wore spring-loaded orthosis on the right knee while stepping in
place. Four experimental conditions were considered: a control
condition (without orthosis), an orthosis condition (with the orthosis only),
and two different springs loaded to the orthosis (spring 1 and spring 2). Ten replicated data sets were collected for each subject under each of the four conditions. The resultant moment for each trial was computed at $T=256$ equidistant time points. \ref{fig:orthosis_data} presents the entire data set. We refer the reader to \cite{cahouet2002static} for further detail on the experiment. 

\begin{figure}[p]
 \centering
 \includegraphics[width= 1\textwidth]{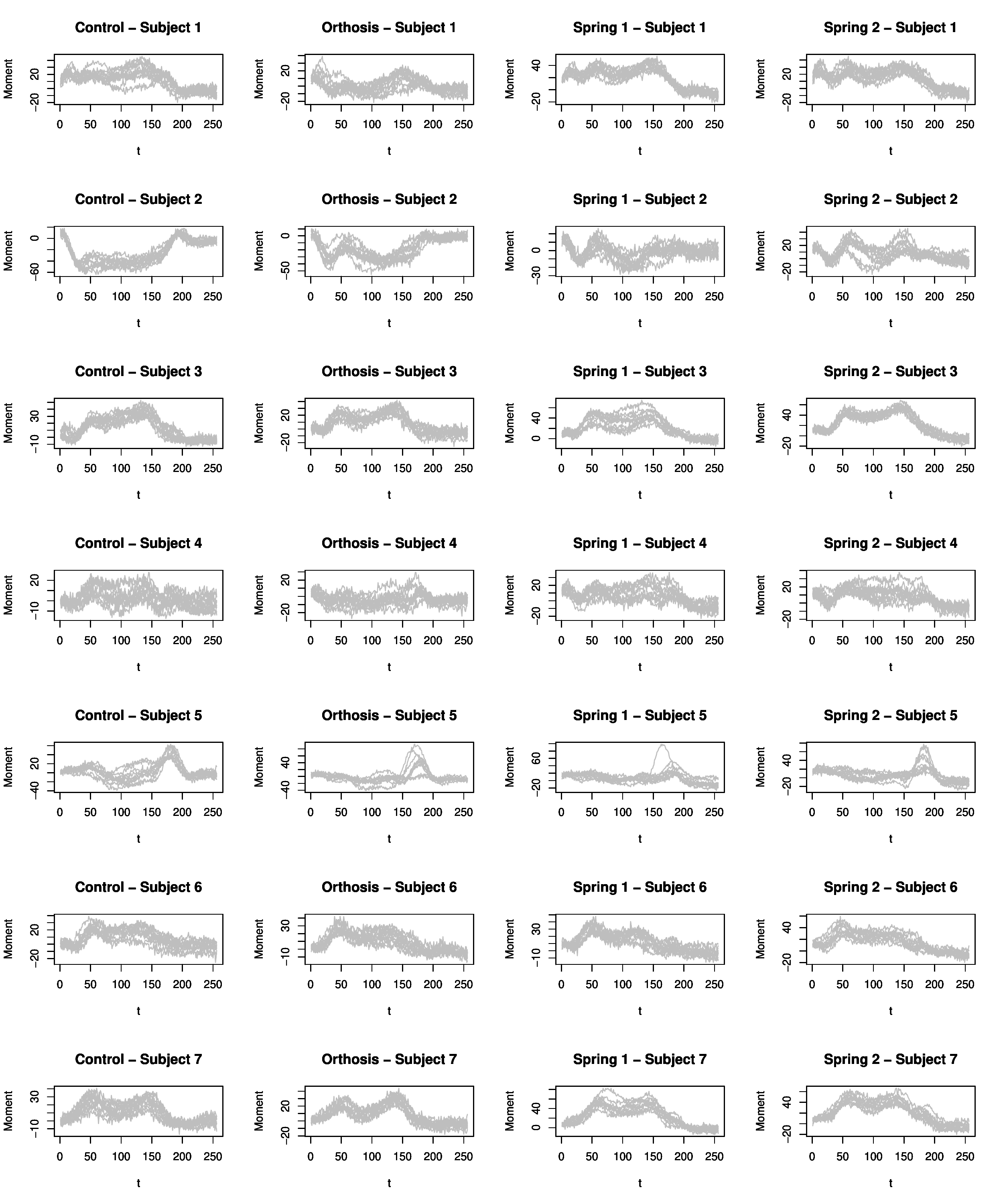}
 \caption{The orthosis data. Each row
corresponds to a subject, each column to an experimental condition, and each curve is a replicate. }
\label{fig:orthosis_data}
\end{figure}

A key question to address from this data is how the four experimental conditions result in different knee movement as measured by the functional shape of the measured moments. 
While the subject-to-subject variation is not of direct interest, it is substantial and must be properly taken into account. Treating both the experimental conditions and the subjects as factors, the experiment corresponds to a two-way ANOVA design. We apply our NIG-MG model for two-way fANOVA. We set the prior null probability to about 50\% with $\eta=0.3$ and $\gamma=0.4$. The posterior probability for the joint null that there is no difference among the four conditions is virtually zero. \ref{fig:orthosis_pmaps} presents the PMAPs for the two factors.
\begin{figure}[t]
 \centering
 \includegraphics[width= 0.95\textwidth]{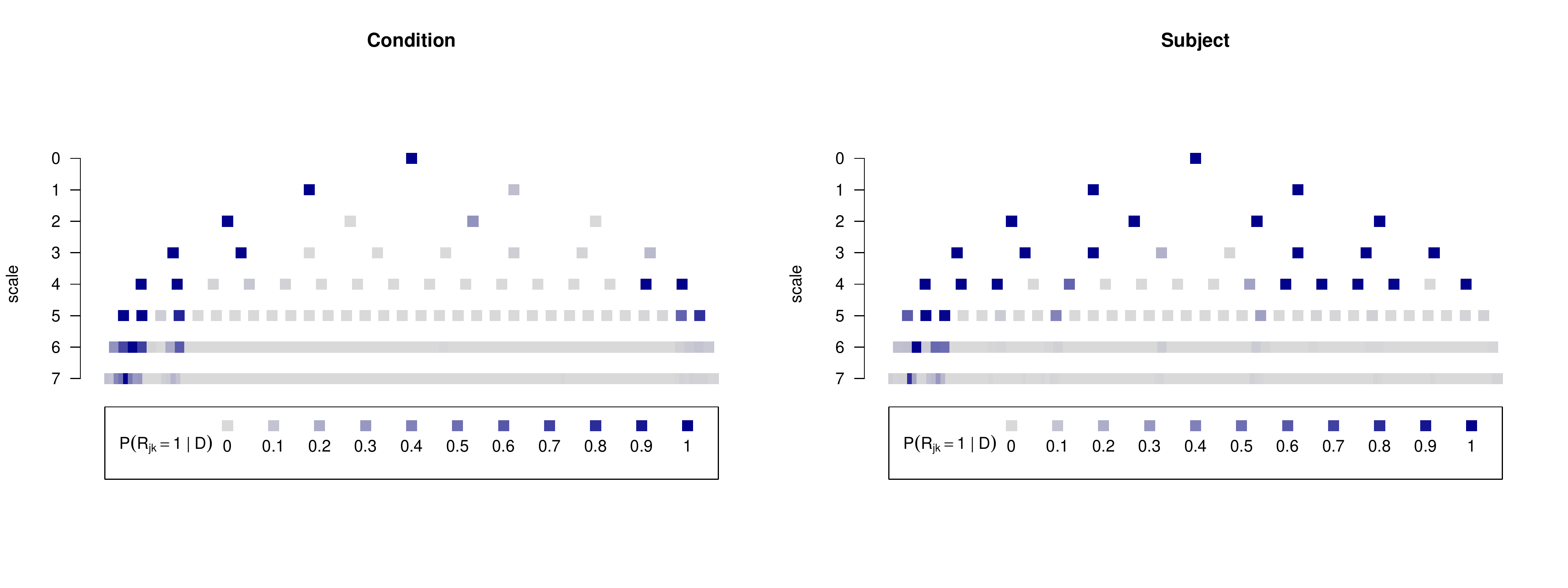}
\vspace{-3em}

 \caption{The PMAPs for both factors---the experiment condition (left) and the subject (right) under the NIG-MG model.}
\label{fig:orthosis_pmaps}
\end{figure}

\begin{figure}[p]
 \centering
 \includegraphics[width= 1\textwidth]{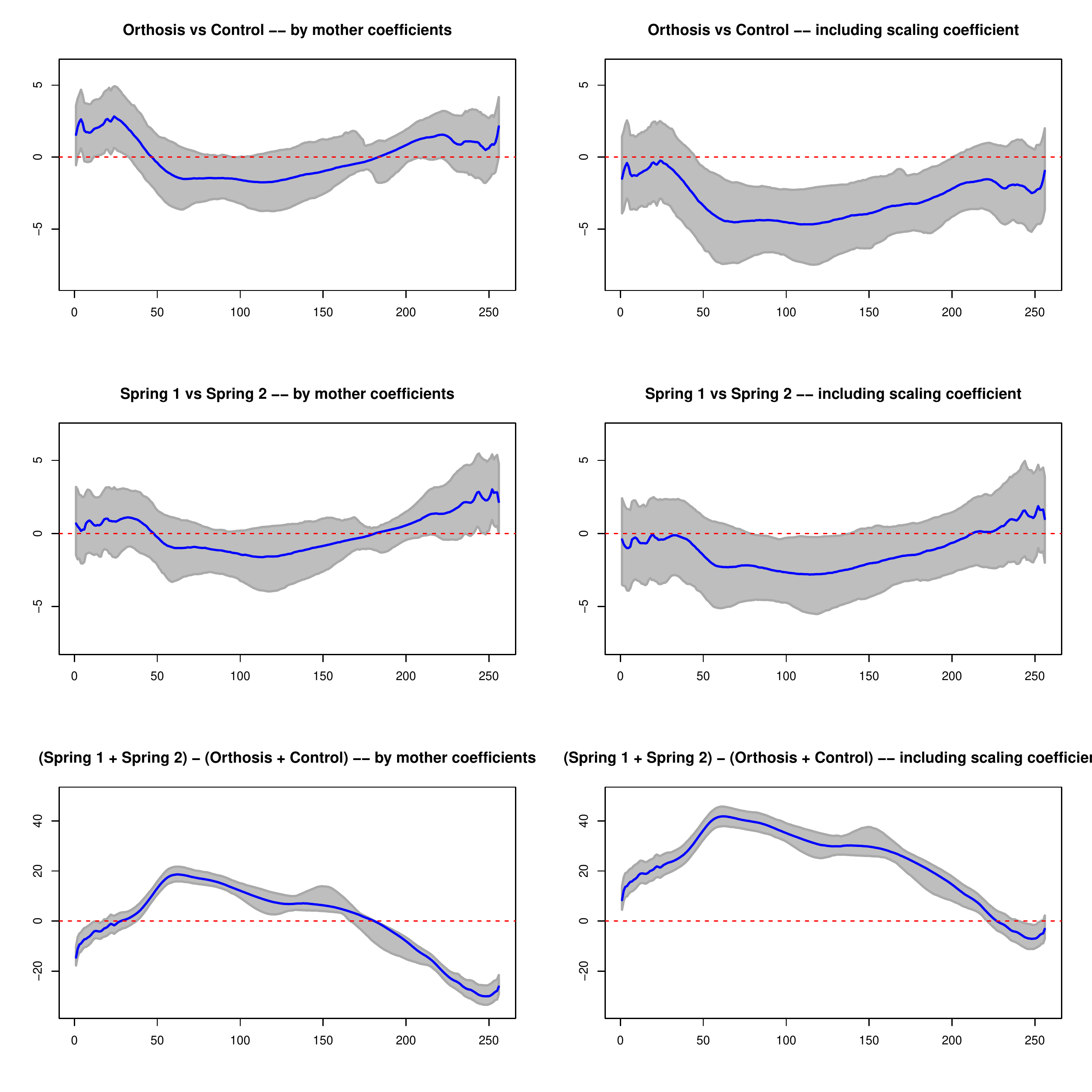}
\vspace{-1em}

 \caption{Pointwise credible bands for three different contrasts---Row 1: Orthosis versus Control, Row 2: Spring 1 vs Spring 2, Row 3: Spring vs No spring. Left column: Credible bands constructed based on the posterior samples of the mother wavelet coefficients from the NIG-MG model. Right column: Credible bands constructed based on the posterior samples of the mother wavelet coefficients fro the NIG-MG model as well as posterior draws for the scaling (father wavelet) coefficient.}
\label{fig:orthosis_credible_bands}
\end{figure}

While there is strong evidence for difference across the experiment conditions (as well as the subjects), one may also be interested in investigating the factor contributions with regard to certain contrasts---e.g., the difference between the orthosis only condition and the control, the difference between the two different springs, and the difference between the spring vs no spring conditions. The fully probabilistic nature of the NIG-MG framework allow us to address such tasks while properly taking into account the uncertainty involved through the standard Bayesian recipe---sampling from the joint posterior and construct credible bands for the corresponding contrasts. 

\ref{fig:orthosis_credible_bands} presents the credible bands for these three contrasts. In particular, for each contrast we construct two credible bands. The first credible band (shown in the left column in the figure) is constructed using the posterior samples for just the mother wavelet coefficients, excluding the scale coefficient. The quantify the uncertainty in the shape of the functional contrast but not in the mean level of the contrast across all locations (here the time points). In addition, we create also the credible band that uses the posterior samples for both the mother and the scaling coefficients, which incorporates both the functional shape and the mean while taking into account the uncertainty from both sources. According to our knowledge, all of the previous analysis of the data set using wavelet-based methods \cite{abramovich2004optimal,abramovich2006testing,antoniadis2007estimation} only provides point estimate of the contrasts without providing uncertainty quantification.

 \vspace{-1.5em}
 
\section{Conclusion}
\label{sec:conclusion}
\vspace{-0.7em}

We have introduced a new Bayesian hierarchical model in the wavelet domain for addressing the functional analysis of variance problem.  By incorporating a graphical model that links the presence and absence of factor effects on the wavelet coefficients, this model allos effective borrowing of information across locations and scales, and this in turn leads substantial performance gain over methods that ignore such dependency, especially in situations where the underlying factor effects are local. Moreover, the exact posterior of the model can be computed exactly through an efficient pyramid type recursive algorithm which is linear in both the number of observations and the number of locations, i.e., the complexity as Mallat's pyramid algorithm for DWT. In addition, the fully probabilistic nature of the model allows inference to be carried out in a principled manner---uncertainty quantification is achieved through posterior probabilities and credible bands.

The computational complexity of the pyramid algorithm for evaluating the exact posterior of NIG-MG is linear in both $n$ and $T$. So inference scales well with the number of functional observations $n$ as well as the number of locations $m$. However, as the number of factors $L$ grows, the computation scales as $O(4^L)$ and so can become infeasible if the number of factors $L$ is large. In most applications of fANOVA, however, the number of factors $L$ is usually small $\leq 5$ for which exact inference can be completed quickly.

\bibliography{References}{}

\begin{thebibliography}{}
\ifx \url   \undefined \def \url#1{#1}   \fi

\bibitem{abramovich2006testing}
\textsc{Abramovich, F.} \textsc{and} \textsc{Angelini, C.} (2006).
\newblock Testing in mixed-effects fanova models.
\newblock \emph{Journal of statistical planning and
  inference\/}~\textbf{136},~12, 4326--4348.

\bibitem{abramovich2004optimal}
\textsc{Abramovich, F.}, \textsc{Antoniadis, A.}, \textsc{Sapatinas, T.},
  \textsc{and} \textsc{Vidakovic, B.} (2004).
\newblock Optimal testing in a fixed-effects functional analysis of variance
  model.
\newblock \emph{International Journal of Wavelets, Multiresolution and
  Information Processing\/}~\textbf{2},~04, 323--349.

\bibitem{abramovich1998wavelet}
\textsc{Abramovich, F.}, \textsc{Sapatinas, T.}, \textsc{and}
  \textsc{Silverman, B.~W.} (1998).
\newblock Wavelet thresholding via a bayesian approach.
\newblock \emph{Journal of the Royal Statistical Society: Series B (Statistical
  Methodology)\/}~\textbf{60},~4, 725--749.

\bibitem{antoniadis2007estimation}
\textsc{Antoniadis, A.} \textsc{and} \textsc{Sapatinas, T.} (2007).
\newblock Estimation and inference in functional mixed-effects models.
\newblock \emph{Computational Statistics \& Data Analysis\/}~\textbf{51},~10,
  4793--4813.

\bibitem{brown:2001}
\textsc{Brown, P.~J.}, \textsc{Fearn, T.}, \textsc{and} \textsc{Vannucci, M.}
  (2001).
\newblock Bayesian wavelet regression on curves with application to a
  spectroscopic calibration problem.
\newblock \emph{Journal of the American Statistical
  Association\/}~\textbf{96},~454, 398--408.

\bibitem{cahouet2002static}
\textsc{Cahou{\"e}t, V.}, \textsc{Luc, M.}, \textsc{and} \textsc{David, A.}
  (2002).
\newblock Static optimal estimation of joint accelerations for inverse dynamics
  problem solution.
\newblock \emph{Journal of Biomechanics\/}~\textbf{35},~11, 1507--1513.

\bibitem{chipman1997adaptive}
\textsc{Chipman, H.~A.}, \textsc{Kolaczyk, E.~D.}, \textsc{and}
  \textsc{McCulloch, R.~E.} (1997).
\newblock Adaptive bayesian wavelet shrinkage.
\newblock \emph{Journal of the American Statistical
  Association\/}~\textbf{92},~440, 1413--1421.

\bibitem{clyde:2000}
\textsc{Clyde, M.} \textsc{and} \textsc{George, E.~I.} (2000).
\newblock Flexible empirical bayes estimation for wavelets.
\newblock \emph{Journal of the Royal Statistical Society: Series B (Statistical
  Methodology)\/}~\textbf{62},~4, 681--698.

\bibitem{clyde1998multiple}
\textsc{Clyde, M.}, \textsc{Parmigiani, G.}, \textsc{and} \textsc{Vidakovic,
  B.} (1998).
\newblock Multiple shrinkage and subset selection in wavelets.
\newblock \emph{Biometrika\/}~\textbf{85},~2, 391--401.

\bibitem{crouse1998wavelet}
\textsc{Crouse, M.~S.}, \textsc{Nowak, R.~D.}, \textsc{and} \textsc{Baraniuk,
  R.~G.} (1998).
\newblock Wavelet-based statistical signal processing using hidden markov
  models.
\newblock \emph{Signal Processing, IEEE Transactions on\/}~\textbf{46},~4,
  886--902.

\bibitem{donoho1994ideal}
\textsc{Donoho, D.~L.} \textsc{and} \textsc{Johnstone, J.~M.} (1994).
\newblock Ideal spatial adaptation by wavelet shrinkage.
\newblock \emph{Biometrika\/}~\textbf{81},~3, 425--455.

\bibitem{johnstone&silverman:2004}
\textsc{Johnstone, I.~M.} \textsc{and} \textsc{Silverman, B.~W.} (2004).
\newblock Needles and straw in haystacks: Empirical bayes estimates of possibly
  sparse sequences.
\newblock \emph{Ann. Stat.\/}~\textbf{32},~4 (08), 1594--1649.

\bibitem{mallat:1989}
\textsc{Mallat, S.~G.} (1989).
\newblock A theory for multiresolution signal decomposition: The wavelet
  representation.
\newblock \emph{IEEE Trans. Pattern Anal. Mach. Intell.\/}~\textbf{11},~7
  (July), 674--693.
\newblock http://dx.doi.org/10.1109/34.192463.

\bibitem{mckay:2013}
\textsc{McKay, J.~L.}, \textsc{Welch, T. D.~J.}, \textsc{Vidakovic, B.},
  \textsc{and} \textsc{Ting, L.~H.} (2013).
\newblock Statistically significant contrasts between emg waveforms revealed
  using wavelet-based functional anova.
\newblock \emph{Journal of Neurophysiology\/}~\textbf{109},~2, 591--602.

\bibitem{morris2006wavelet}
\textsc{Morris, J.~S.} \textsc{and} \textsc{Carroll, R.~J.} (2006).
\newblock Wavelet-based functional mixed models.
\newblock \emph{Journal of the Royal Statistical Society: Series B (Statistical
  Methodology)\/}~\textbf{68},~2, 179--199.

\bibitem{mueller_etal_2007}
\textsc{M\"{u}ller, P.}, \textsc{Parmigiani, G.}, \textsc{and} \textsc{Rice,
  K.} (2007).
\newblock Fdr and bayesian multiple comparisons rules.
\newblock In \emph{Bayesian Statistics 8}, {J.~M. Bernardo}, {J.~O. Berger},
  {A.~P. Dawid}, {D.~Heckerman}, {A.~F.~M. Smith}, {and} {M.~West}, Eds. Oxford
  University Press.

\bibitem{percival2006wavelet}
\textsc{Percival, D.} \textsc{and} \textsc{Walden, A.} (2006).
\newblock \emph{Wavelet Methods for Time Series Analysis}.
\newblock Cambridge Series in Statistical and Probabilistic Mathematics.
  Cambridge University Press.

\bibitem{ramsay&silverman2005fda}
\textsc{Ramsay, J.~O.} \textsc{and} \textsc{Silverman, B.~W.} (2005).
\newblock \emph{{Functional Data Analysis}}, 2nd ed.
\newblock Springer Series in Statistics. Springer.

\bibitem{rosner2000wavelet}
\textsc{Rosner, G.~L.} \textsc{and} \textsc{Vidakovic, B.} (2000).
\newblock Wavelet functional anova, bayesian false discovery rate, and
  longitudinal measurements of oxygen pressure in rats.

\bibitem{zhang:2014}
\textsc{Zhang, J.} (2013).
\newblock \emph{Analysis of Variance for Functional Data}.
\newblock Hall/CRC Monographs on Statistics \& Applied Probability. Taylor \&
  Francis Group.

\end{thebibliography}

\newpage

\beginsupplement

\section*{Supplementary Materials}
\subsubsection*{S1.~Proofs}
\begin{proof}[Proof of Theorem~\ref{thm:post_hmt}]
This theorem is a special case of Theorem~\ref{thm:post_anova} and thus follows from Theorem~\ref{thm:post_anova}.
\end{proof}

\begin{proof}[Proof of Theorem~\ref{thm:post_anova}]
First define $\D_{j,k}$ to be the data, i.e., the empirical wavelet coefficients observed on all subtree rooted at node $(j,k)$, $\T(j,k)$. Then note that $\xi(s,r)$ is the marginal likelihood on the subtree $\T(j,k)$ given the event that $S_{j-1,\lfloor k/2, \rfloor}=s$ and $R_{j-1,\lfloor k/2, \rfloor}=r$, and $\phi(s,r)$ is the marginal likelihood on $\T(j,k)$ given the event that $S_{j,k}=s$ and $R_{j,k}=r$.

Now by Bayes theorem, 
\begin{align*}
&\Pr(S_{j,k} =  s', R_{j,k} =  r' | S_{j-1,\lfloor k/2 \rfloor} = s, 
R_{j-1,\lfloor k/2 \rfloor}=
r,\D)\\
=&\frac{\Pr(S_{j,k} =  s', R_{j,k} =  r', \mathcal{D}_{j,k} |  S_{j-1,\lfloor k/2 \rfloor} = s, R_{j-1,\lfloor k/2 \rfloor}=r)}{\Pr(\mathcal{D}_{j,k} |  S_{j-1,\lfloor k/2 \rfloor} = s, R_{j-1,\lfloor k/2 \rfloor}=r)}\\ 
=& \frac{\Pr(S_{j,k} =  s', R_{j,k} =  r' |  S_{j-1,\lfloor k/2 \rfloor} = s, R_{j-1,\lfloor k/2 \rfloor}=r)\cdot \Pr(\mathcal{D}_{j,k} |  S_{j,k} = s', R_{j,k}=r')}{\Pr(\mathcal{D}_{j,k} |  S_{j-1,\lfloor k/2 \rfloor} = s, R_{j-1,\lfloor k/2 \rfloor}=r)}\\
=&\rho_{j,k}(s,s') \kappa_{j,k}(r,r') 
\phi_{j,k}(s',r')/\xi_{j,k}(s,r)
\end{align*}
for $j\geq 1$. Now for $j=0$, similarly, 
\begin{align*}
&\Pr(S_{j,k} =  s', R_{j,k} =  r' | \D)\\
=&\Pr(S_{j,k} =  s', R_{j,k} =  r', \mathcal{D}_{j,k})/\Pr(\mathcal{D}_{j,k})\\ 
=& \Pr(S_{j,k} =  s', R_{j,k} =  r' )\cdot \Pr(\mathcal{D}_{j,k} |  S_{j,k} = s', R_{j,k}=r')/\Pr(\mathcal{D}_{j,k})\\
=&\rho_{0,0}(s') \kappa_{0,0}(r') 
\phi_{0,0}(s',r')/\xi_{0,0}(0,0).
\end{align*}
Now we show that the marginal likelihood $\xi$ and $\phi$ indeed follow the recursive expression. To this end, note that by definition, for $j=J$,
\begin{align*}
\phi_{j,k}(s,r) = \Pr(\D_{j,k}|S_{j,k}=s,R_{j,k}=r)=m(s,r).
\end{align*}
Then also by definition for $j>0$,
\begin{align*}
\xi_{j,k}(s,r) &= \Pr(\D_{j,k} |  S_{j-1,\lfloor k/2 \rfloor} = s, R_{j-1,\lfloor k/2 \rfloor}=r)\\
&= \sum_{s',r'}  \Pr(S_{j,k}=s',R_{j,k}=r' |  S_{j-1,\lfloor k/2 \rfloor} = s, R_{j-1,\lfloor k/2 \rfloor}=r)\cdot \Pr(\D_{j,k}|S_{j,k}=s,R_{j,k}=r)\\
&=\sum_{s',r'} \rho_{j,k}(s,s') \cdot \kappa_{j,k}(r,r') \phi_{j,k}(s',r')
\end{align*}
and for $j=0$,
\begin{align*}
\xi_{j,k}(s,r) &= \Pr(\D_{j,k})= \sum_{s',r'}  \Pr(S_{j,k}=s',R_{j,k}=r' )\cdot \Pr(\D_{j,k}|S_{j,k}=s,R_{j,k}=r)\\
&=\sum_{s',r'} \rho_{0,0}(s') \cdot \kappa_{0,0}(r') \phi_{0,0}(s',r')
\end{align*}
Similarly, for $j<J$, 
\begin{align*}
\phi_{j,k}(s,r) &= \Pr(\D_{j,k}|S_{j,k}=s,R_{j,k}=r)\\
&=m(s,r)\cdot \Pr(\D_{j+1,2k},\D_{j+1,2k+1}|S_{j,k}=s,R_{j,k}=r)\\
&=m_{j,k}(s,r) \cdot \xi_{j+1,2k}(s,r) \cdot \xi_{j+1,2k+1}(s,r).
\end{align*}
This shows that the marginal posterior on the latent variables is a MT with states $\{0,1\}\times\{0,1\}$ the claimed transition matrix. 
The conditional posteriors of the regression coefficients and the errors follow directly from standard results on Bayesian linear regression with the NIG conjugate prior. 
\end{proof}

\subsubsection*{S2.~Posterior NIG-MG for multiple factors}
With $L$ factors, the design matrix is now
\vspace{-1.7em}

 $$
X = ( \mathbbm{1}_n, \vect{e}^{(1)}_2, \ldots, \vect{e}^{(1)}_{G_1},\ldots, \vect{e}^{(L)}_2, \ldots, \vect{e}^{(L)}_{G_L}),
$$
\vspace{-2.5em}

\noindent where $ n = \sum_{l=1}^{L}\sum_{g=1}^{G_l} n_g$, $\mathbbm{1}_n$ is a vector of $n$ ones, and $\vect{e}^{(l)}_g$ is a binary vector where the
$h$th element is equal to one if  the $h$th observation belongs to
the $g$th group for the $l$th factor, and equal to zero otherwise. We also define the following matrices for $(s,\vect{r})=(s,r_1,r_2,\ldots,r_L)\in\{0,1\}^{L+1}$:
\vspace{-1.5em}

\[
X(s,\vect{r}) = 
(s\mathbbm{1}_n, r_1\vect{e}^{(1)}_2, \ldots, r_1\vect{e}^{(1)}_G,\ldots,r_L\vect{e}^{(L)}_2, \ldots, r_L\vect{e}^{(L)}_G ),\]
\[ 
\Lambda_{j} = 
\text{diag}(1/\tau_j,\underbrace{1/\upsilon^{(1)}_j,\ldots,1/\upsilon^{(1)}_j}_{\text{$G_{1}-1$ copies}},\ldots, \underbrace{1/\upsilon^{(L)}_j,1/\upsilon^{(L)}_j,\ldots,1/\upsilon^{(L)}_j}_{\text{$G_{L}-1$ copies}})
\]
where $\upsilon^{(l)}_{j}=2^{-\alpha j} \upsilon_{l}$ and 
\[
M(s,\vect{r}) = 
 \left(
\begin{array}{c}
 s \\
 r_1 \mathbbm{1}_{G_{1}-1} \\
 r_2 \mathbbm{1}_{G_{2}-1} \\
 \vdots\\
 r_L \mathbbm{1}_{G_{L}-1}
\end{array} 
\right)
\left(
\begin{array}{ccccc}
 s & r_1 \mathbbm{1}_{G_{1}-1}' & r_2 \mathbbm{1}_{G_{2}-1}' & \cdots & r_L \mathbbm{1}_{G_{L}-1}'
\end{array} 
\right)
.
\]

\noindent The marginal likelihood for the node-specific model on
$(j,k)$ given $S_{j,k} = s$ and $\vect{R}_{j,k} = \vect{r}$ is
\begin{equation}\label{eq:marginal_like}
m_{j,k}(s,\vect{r}) = 
  \frac{( \nu \sigma_0^2 )^{\nu+1} \Gamma(\nu + n/2+1)}{(2\pi)^{n/2}\Gamma(\nu + 1)} \cdot \dfrac{ |\Lambda_{j}|^{1/2} }{|\Lambda_{j}^{*}(s,\vect{r})|^{1/2}} 
  \cdot \big[ \nu \sigma_0^2
+ \Upsilon_{j,k}(s,\vect{r}) \big]^{-\nu-n/2-1}.
\end{equation}
where
\vspace{-3.5em}

\begin{align*}
&\Upsilon_{j,k}(s,\vect{r})=\left\{ \vect{d}_{j,k}'\vect{d}_{j,k} 
  - [\vect{\mu}_{j,k}^*(s,\vect{r})]'   \Lambda_{j}^{*}(s,\vect{r})\vect{\mu}_{j,k}^*(s,\vect{r})\right\}/2,\\
\Lambda_{j}^*(s,\vect{r}) &= X(s,\vect{r})'X(s,\vect{r}) + \Lambda_{j}, \quad \text{and} \quad \vect{\mu}_{j,k}^*(s,\vect{r}) = [\Lambda_{j}^*(s,\vect{r})]^{-1} [
 X(s,\vect{r})' \vect{d}_{j,k}  ].
 \end{align*}
\vspace{-2.5em}

\begin{theorem}
\label{thm:post_anova_multi_factor}
 The joint posterior on  $\{z_{j,k},
\bbet_{j,k}, S_{j,k}, \vect{R}_{j,k}, \sigma_{j,k}^2 : (j,k)
\in \mathcal{T} \}$ under the $L$-factor NIG-MG is as follows.
\begin{itemize}
 \item The marginal posterior of the hidden states 
$ \{ (S_{j,k}, \vect{R}_{j,k}) : (j,k)
\in \mathcal{T} \}$ is an MT defined on the product state-space $\{0,1\}^{L+1}$ with 
\begin{enumerate}
 \item State transition probabilities:
 \vspace{-3.5em}
 
\begin{align*}
\hspace{-4em} \Pr(S_{j,k} =  s', \vect{R}_{j,k} =  \vect{r}' | S_{j-1,\lfloor k/2 \rfloor} = s, 
\vect{R}_{j-1,\lfloor k/2 \rfloor} =
\vect{r},\mathcal{D})= \rho_{j,k}(s,s') \prod_{l=1}^{L} \kappa_{j,k}(r_l,r_l') \cdot
\frac{\phi_{j,k}(s',\vect{r}')}{\xi_{j,k}(s,\vect{r})},
\end{align*}
\vspace{-3.5em}
 
\noindent for $j=1, \ldots, J$.
 \item Initial state probabilities:
 \vspace{-2em}
 
$$
\Pr(S_{0,0}=s, \vect{R}_{0,0}=\vect{r} | \mathcal{D}) = 
\rho_{0,0}(s) \prod_{l=1}^{L} \kappa_{0,0}(r_l) \cdot
\frac{\phi_{0,0}(s,\vect{r})}{\xi_{0,0}(0,\vect{0})}.
$$
\end{enumerate}
\vspace{-0.5em}

\item The conditional posterior of $\sigma_{j,k}^2$ given $S_{j,k}$ and
$\vect{R}_{j,k}$ is: 
\vspace{-2em}
 
$$
 [ \sigma_{j,k}^2 | S_{j,k}, \vect{R}_{j,k}, \mathcal{D}] 
\sim \text{Inv-Gamma}\bigg( \nu + 1 + \dfrac{n}{2}, 
\nu \sigma_0^2
+ \Upsilon_{j,k}(S_{j,k},\vect{R}_{j,k})    \bigg).
$$

\item The posterior of $z_{j,k},
\beta_{1\, j,k}^{(2)}, \ldots, \beta_{1\,j,k}^{(G_1)},\ldots,\beta_{L\, j,k}^{(2)}, \ldots, \beta_{L\,j,k}^{(G_L)}$ given $S_{j,k}$, $\vect{R}_{j,k}$ and
$\sigma_{j,k}^2$ is given as follows 
\vspace{-3.5em}

\begin{align*}
&[ z_{j,k}, \beta_{1\, j,k}^{(2)}, \ldots, \beta_{1\,j,k}^{(G_1)},\ldots,\beta_{L\, j,k}^{(2)}, \ldots, \beta_{L\,j,k}^{(G_L)}\, |\, 
  \sigma_{j,k}^2, S_{j,k}, \vect{R}_{j,k},
 \mathcal{D}]\\ 
& \hspace{5em} \sim {\rm N}\bigg( \vect{\mu}^*_{j,k}(S_{j,k},\vect{R}_{j,k}), \sigma_{j,k}^2\,M(S_{j,k},\vect{R}_{j,k}) \circ  [
 \Lambda_{j}^*(S_{j,k},\vect{R}_{j,k}) ]^{-1} \bigg).
\end{align*}
\vspace{-3.5em}
\end{itemize}
The mappings $\phi_{j,k}(s,\vect{r})$ and $\xi_{j,k}(s,\vect{r}): \{0,1 \}^{L+1}
\to [0, + \infty) $ can be computed recursively through a pyramid algorithm as follows:
\vspace{-2.5em}

\begin{align*}
\phi_{j,k}(s,\vect{r})& =
\left\{ 
\begin{array}{ll}
 m_{j,k}(s,\vect{r}) \cdot \xi_{j+1,2k}(s,\vect{r}) \cdot \xi_{j+1,2k+1}(s,\vect{r}) & \text{for
$j=0,1,2\ldots,J-1$}\\
m_{j,k}(s,\vect{r}) & \text{for $j=J$,}
\end{array}  \right. \\
 \xi_{j,k}(s,\vect{r}) & = 
\left\{
\begin{array}{ll}
 \sum_{ s', \vect{r}' }
\rho_{j,k}(s,s') \cdot \prod_{l=1}^{L}\kappa_{j,k}(r_l,r_l') \cdot \phi_{j,k}(s',\vect{r}')  & \text{for
$j=1,2,\ldots,J$}\\
\sum_{s', \vect{r}'} \rho_{0,0}(s') \cdot \prod_{l=1}^{L}\kappa_{0,0}(r_l) \cdot
\phi_{0,0}(s',\vect{r}') & \text{for $j=0$.}
\end{array}
\right.
\end{align*}
\end{theorem}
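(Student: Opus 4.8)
The plan is to observe that Theorem~\ref{thm:post_anova_multi_factor} is a direct structural replica of Theorem~\ref{thm:post_anova}, with the scalar factor indicator $R_{j,k}$ replaced by the vector $\vect{R}_{j,k}=(R_{1\,j,k},\ldots,R_{L\,j,k})$ and the product state space $\{0,1\}^{2}$ enlarged to $\{0,1\}^{L+1}$. Accordingly, I would reuse the three-ingredient argument from the one-way proof: (i) the tree Markov / conditional-independence structure of the latent states; (ii) the resulting pyramid recursions for the subtree marginal likelihoods $\phi$ and $\xi$; and (iii) the node-wise NIG conjugacy for the regression coefficients and error variance.

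First I would establish that, under the prior, the collection $\{(S_{j,k},\vect{R}_{j,k}):(j,k)\in\T\}$ is itself a Markov tree on the product state space $\{0,1\}^{L+1}$. This is the one genuinely new step relative to the one-way case, and it follows from the modeling assumption that the $L+1$ constituent trees---the $S$-tree with kernel $\brho_{j,k}$ and the $R_l$-trees with kernel $\bkap_{j,k}$---are mutually independent given their transition matrices. Independence of the component chains evolving on the common tree topology makes the joint process Markov with a product transition kernel, so the prior transition probability factorizes as $\rho_{j,k}(s,s')\prod_{l=1}^{L}\kappa_{j,k}(r_l,r_l')$ and the initial distribution as $\rho_{0,0}(s)\prod_{l=1}^{L}\kappa_{0,0}(r_l)$.

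Next, writing $\D_{j,k}$ for the wavelet coefficients observed on the subtree $\T(j,k)$, I would identify $\phi_{j,k}(s,\vect{r})$ as the marginal likelihood of $\D_{j,k}$ given $(S_{j,k},\vect{R}_{j,k})=(s,\vect{r})$ and $\xi_{j,k}(s,\vect{r})$ as the marginal likelihood of $\D_{j,k}$ given the parent states. The pyramid recursions then drop out of the tree structure exactly as before: conditional on the node states, the node-specific data is independent of the two child subtrees and the two child subtrees are mutually independent, giving $\phi_{j,k}(s,\vect{r})=m_{j,k}(s,\vect{r})\,\xi_{j+1,2k}(s,\vect{r})\,\xi_{j+1,2k+1}(s,\vect{r})$ for $j<J$ (and $\phi_{j,k}=m_{j,k}$ at the finest level $j=J$), while marginalizing over the child states against the product kernel gives $\xi_{j,k}(s,\vect{r})=\sum_{s',\vect{r}'}\rho_{j,k}(s,s')\prod_{l=1}^{L}\kappa_{j,k}(r_l,r_l')\,\phi_{j,k}(s',\vect{r}')$. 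A node-wise application of Bayes' theorem, identical to the one-way derivation, then yields the posterior transition probability as the prior transition times $\phi_{j,k}(s',\vect{r}')/\xi_{j,k}(s,\vect{r})$, establishing that the posterior on the latent states is again a Markov tree on $\{0,1\}^{L+1}$.

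Finally, the conditional posteriors in the last two bullets follow from standard NIG-conjugate Bayesian linear regression applied node by node. Conditional on $(S_{j,k},\vect{R}_{j,k})=(s,\vect{r})$, the node model $\vect{d}_{j,k}=X(s,\vect{r})\vect{\theta}_{j,k}+\vect{u}_{j,k}$ is Gaussian with prior precision $\Lambda_j$ on $\vect{\theta}_{j,k}$ and Inv-Gamma prior on $\sigma_{j,k}^2$; the conjugate update gives the Normal posterior with mean $\vect{\mu}^*_{j,k}(s,\vect{r})$ and the Inv-Gamma posterior with scale $\nu\sigma_0^2+\Upsilon_{j,k}(s,\vect{r})$, while the Hadamard product with $M(s,\vect{r})$ simply zeroes out the coordinates switched off by $(s,\vect{r})$. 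I expect the only point requiring real care to be the first step---verifying that the product of the $L+1$ independent trees is jointly Markov on $\{0,1\}^{L+1}$ with the factorized kernel, and that the subtree conditional independence used in the recursions holds simultaneously across all $L+1$ trees; once that factorization is in hand, the remainder is a notational replay of the proof of Theorem~\ref{thm:post_anova} with the enlarged design matrix.
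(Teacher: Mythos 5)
Your proposal is correct and follows essentially the same route as the paper: the paper gives no separate proof for the multi-factor case, presenting it as a direct variant of Theorem~\ref{thm:post_anova}, whose proof proceeds exactly as you describe---identifying $\phi_{j,k}$ and $\xi_{j,k}$ as subtree marginal likelihoods, applying Bayes' theorem node by node to obtain the posterior transition kernel, verifying the pyramid recursions from the tree conditional-independence structure, and invoking standard NIG-conjugate linear regression for the coefficient and variance posteriors. Your explicit verification that the $L+1$ independent component trees form a joint Markov tree on $\{0,1\}^{L+1}$ with the factorized kernel is the one step the paper leaves implicit, and it is a worthwhile addition.
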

\newpage 

\subsubsection*{S3.~Additional figures}
\vspace{-1em}

\begin{figure}[h!]
\vspace{-1em}

 \centering
 \includegraphics[width= 0.97\textwidth]{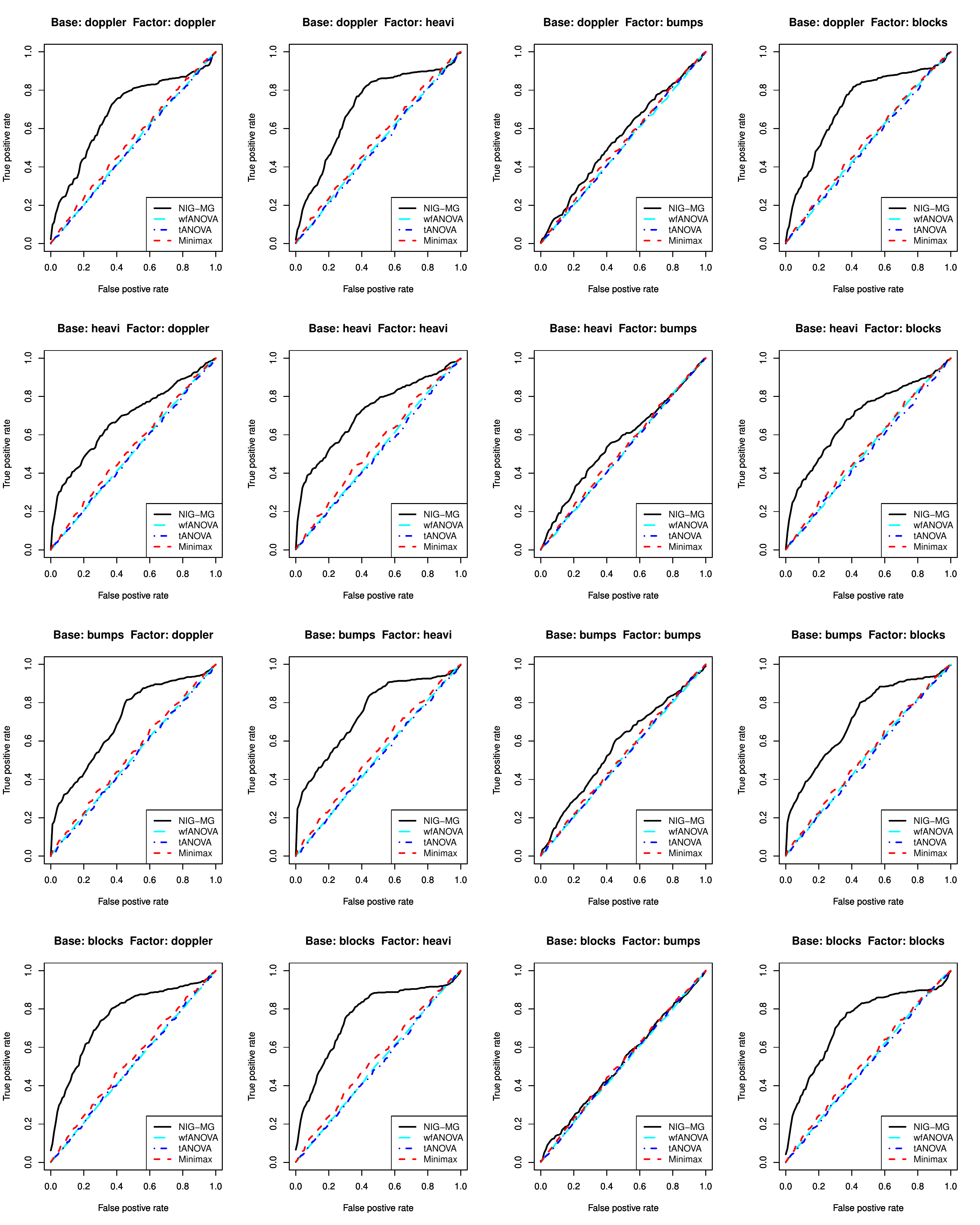}
 \vspace{-0.5em}
 
 \caption{ROC curves (RSNR=$0.5$) for four test statistics when the baseline mean and the factor difference are all 16 combinations of the four signature functions.}
\label{fig:roc_cross_signal_rsnr0.5}
\end{figure}
\begin{figure}[p]
 \centering
 \includegraphics[width= 1\textwidth]{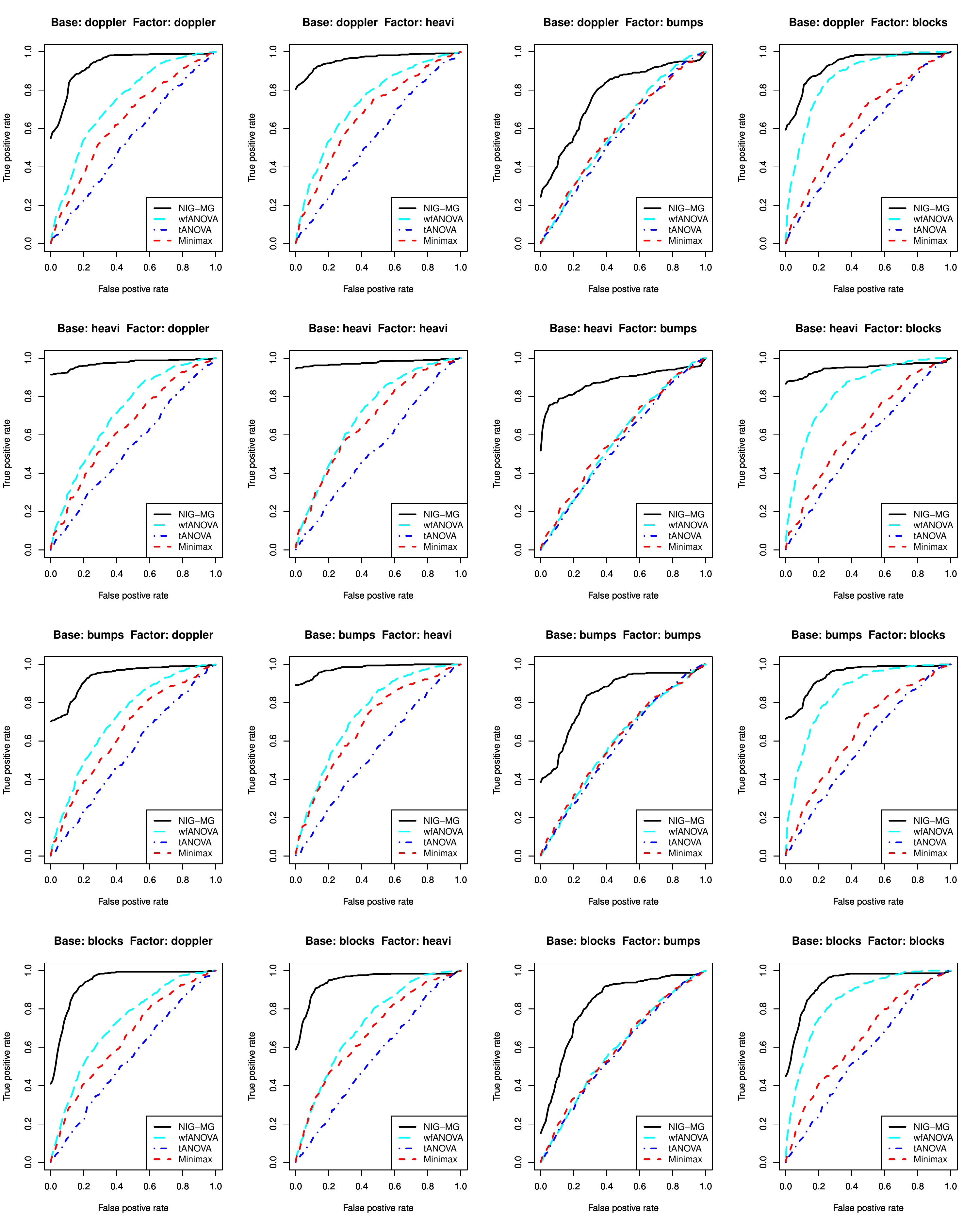}
 \caption{ROC curves (RSNR$=1.5$) for four test statistics when the baseline mean and the factor difference are all 16 combinations of the four signature functions.}
\label{fig:roc_cross_signal_rsnr1.5}
\end{figure}

\end{document}